
\documentclass[letterpaper,twocolumn,10pt]{article}
\usepackage{usenix2019_v3}

\usepackage{tikz}
\usepackage{amsmath}

\usepackage{amssymb}
\usepackage{amsthm}
\usepackage{algorithmic}

\begin{document}

\date{}

\title{A Highly Accurate Query-Recovery Attack against\\ Searchable Encryption using Non-Indexed Documents}

\author{
	{\rm Marc Damie}\\
	University of Technology of Compiègne, France
	\and
	{\rm Florian Hahn}\\
	University of Twente, The Netherlands
	\and
	{\rm Andreas Peter}\\
	University of Twente, The Netherlands
} 

\maketitle

\begin{abstract}
	Cloud data storage solutions offer customers cost-effective and reduced data management. While attractive, data security issues remain to be a core concern. Traditional encryption protects stored documents, but hinders simple functionalities such as keyword search. Therefore, searchable encryption schemes have been proposed to allow for the search on encrypted data. Efficient schemes leak at least the access pattern (the accessed documents per keyword search), which is known to be exploitable in query recovery attacks assuming the attacker has a significant amount of background knowledge on the stored documents. Existing attacks can only achieve decent results with strong adversary models (e.g. at least 20\% of previously known documents or require additional knowledge such as on query frequencies) and they give no metric to evaluate the certainty of recovered queries. This hampers their practical utility and questions their relevance in the real-world.

	We propose a refined score attack which achieves query recovery rates of around 85\% without requiring exact background knowledge on stored documents; a distributionally similar, but otherwise different (i.e., non-indexed), dataset suffices. The attack starts with very few known queries (around 10 known queries in our experiments over different datasets of varying size) and then iteratively recovers further queries with confidence scores by adding previously recovered queries that had high confidence scores to the set of known queries.
	Additional to high recovery rates, our approach yields interpretable results in terms of confidence scores.
\end{abstract}


\section{Introduction}
\label{sec:introduction}
Cloud data storage services continue to be on the rise and attract more users than ever before. At the same time, data is a major target in cyber-attacks and the headlines about data breaches have become mainstream. This makes data security a key concern in this setting. While traditional encryption technology can be used to protect the confidentiality of data, simple functionalities such as searching get lost under encryption. To cope with this, Song, Wagner, and Perrig \cite{songpractical} presented a practical solution to search on encrypted data. A few years later, Curtmola et al. \cite{Curtmola} presented their construction of a searchable symmetric encryption (SSE) scheme based on an inverted index. As a result, their construction can search keywords in encrypted documents in optimal search time.

An (index-based) SSE scheme creates an encrypted index that can be queried to obtain the documents' identifiers containing one keyword.
The encrypted index hides the underlying keywords from the server but leaks the {\it access pattern} for each query; the access pattern is the list of identifiers of all documents containing the queried keyword. This work focuses on single-keyword search SSE schemes that leak the access pattern; we do not consider more complex systems such as encrypted databases.

The access pattern leakage has been shown to be exploitable in passive attacks. Blackstone et al.~\cite{blackstonerevisiting} divided such passive attacks into two categories: 1)~known-data attacks where the adversary has partial or complete knowledge of the documents indexed by the server (also referred to as leakage-abuse attacks), and 2) similar-data attacks where the adversary only knows some (non-indexed) documents similar to the indexed documents (also referred to as inference attacks).
An adversary who can run known-data attacks is more powerful than an attacker who is restricted to similar-data attacks. While Islam et al.~\cite{IKK} and Cash et al.~\cite{cash} motivated their attacks as similar-data attacks, decent results were only achieved in the setting of known-data attacks.

Cash et al. (CGPR) \cite{cash} defined four security levels: L1 to L4. The most secure type of scheme is referred to as L1, which only leaks the access pattern for the keywords which have been queried. The other types successively leak more until L4, which leaks the number of occurrences of each keyword and the pattern of their locations in the documents.
Islam et al. \cite{IKK} (referred to as IKK) proposed the first passive attack exploiting the access pattern leakage. After \cite{IKK}, several passive attacks have been proposed to recover the queries of L1-schemes \cite{cash, ning2018passive, blackstonerevisiting, wang2019practical}.
Although most of these attacks can be executed as similar-data attacks, they are only effective as known-data attacks with exact knowledge of at least 20\% of all indexed documents.

Other attacks have been proposed, for example, by Blackstone et al.~\cite{blackstonerevisiting}, that conceptually only work as known-data attacks and do not support similar-data attacks at all.
However, this conceptual restriction enables the attack to work with less partial knowledge and can be effective with exact knowledge down to 10\% of all indexed documents.
While this represents an impressive improvement in known-data attacks, it still requires exact knowledge of parts of the indexed documents.
On the other hand, Oya and Kerschbaum \cite{oya2021} proposed a new attack augmenting the adversary knowledge with the query frequency.
While effective, this new attack is not directly comparable to our setting with decreased attacker knowledge.
Finally, all existing attacks \cite{IKK, cash, pouliot2016shadow, oya2021, blackstonerevisiting} assume the exact knowledge of the client's keyword universe, i.e., the queryable vocabulary.

In conclusion, no practical similar-data attack has been proposed so far that achieves an accuracy higher than 50\% even under advantageous conditions (e.g., the client's keyword universe is small AND known by the attacker).
Appendix \ref{app:related_work} describes more extensively the related papers and their respective contributions.
Also, we discuss orthogonal lines of research focusing on other attacker models or schemes with different leakage profiles.

\paragraph{Our contribution.}
In this paper, we describe an attack that works \emph{without} knowledge of the indexed documents and only uses similar data.
At the same time, our similar-data attack achieves recovery rates of up to 90\%.
The documents known by the adversary only need to be distributionally close to the indexed documents.
For example, an attacker can mount a successful attack exploiting information from a previous data breach, even if the breached documents have been identified as such and were purged from the index to mitigate future known-data attacks.
For a successful similar-data attack, an adversary correctly recovers most of the queries given knowledge of only 10 query tokens and their corresponding keywords.

Our attack is based on a confidence metric used to score a trapdoor-keyword pair. The score should be maximized when the trapdoor (i.e., a query token) is paired with its (correct) underlying keyword. This confidence score is the key element that provides a good interpretability of the attack results.
We start with our \emph{score attack} that computes a confidence score of each trapdoor-keyword pair and returns, for each trapdoor, the pair with the highest score. This base similar-data attack reaches a recovery rate of 60\% while assuming around 25\% of known queries. Secondly, our paper proposes an improvement strategy reducing drastically the amount of adversary knowledge necessary, especially regarding the known queries.
Our refined score attack, an iterative refinement strategy\footnote{Code: \url{https://github.com/MarcT0K/Refined-score-atk-SSE}}, reaches recovery rates of up to 85\% with only 10 known queries in our experiments over different datasets of varying size.
More specifically, the iterative scoring approach recovers further queries by adding previously recovered queries that had a high confidence score to the set of known queries. Our attack has a low runtime and can be performed in less than two minutes.
As indicated by our experiments, the refined score attack is sensitive to the amount of knowledge available; that is, its accuracy improves with additional adversary knowledge.
This observation was not made for attacks like IKK and CGPR where their performance stays almost the same with growing amounts of known queries.
We show that both padding and obfuscation countermeasures can successfully mitigate our attack. However, these countermeasures induce storage and communication overheads. We also study how the refined score attack behaves when the attacker owns a dataset with a lower degree of similarity.

Our paper highlights that Searchable Symmetric Encryption (SSE) schemes should no longer be used without countermeasures. For example, suppose a company uses SSE to manage the employee mailboxes, each with their own encrypted index and secret key. An attacker with access to just one employee's mailbox may have enough background knowledge to successfully recover the queries of every other employee for which she has only very few known queries. Using the compromised mailbox, she can run a massive file injection attack by sending a few emails to everyone. This preliminary active attack would be a way to obtain the known queries necessary to perform the refined score attack on the rest of the employees' queries. Such a scenario is not possible with the existing known-data attacks because, by definition, they can only recover the queries from the owner of the mailbox accessible by the attacker.

\section{Definitions, attacker models, and assumptions}
The notation as introduced in this section and used throughout this work is summarized in Table \ref{tab:notations}.

\begin{table*}
	\small
	\renewcommand{\arraystretch}{1.3}
	\centering
	\caption{Summary of notations}
	\begin{tabular}{|c|c|c|}
		\hline
		Notation                                 & Meaning                                                                                                         & Size notation                          \\
		\hline \hline
		\multicolumn{3}{|c|}{\textbf{Base adversary knowledge}}                                                                                                                                             \\
		\hline
		$\mathcal{D}_{\text{sim}}$               & Similar document set                                                                                            & $n_{\text{sim}}$                       \\ \hline
		$\mathcal{Q}$                            & Queries observed by the attacker (i.e., a list of trapdoors)                                                    & $l$                                    \\ \hline
		$R_{\mathcal{Q}}$                        & Results of the queries from $\mathcal{Q}$ (i.e., a list of document identifiers for each $td$ in $\mathcal{Q}$) & $l$                                    \\ \hline
		$\text{Known}\mathcal{Q}$                & (trapdoor, keyword) pairs known by the attacker                                                                 & $k$                                    \\

		\hline \hline
		\multicolumn{3}{|c|}{\textbf{Derived adversary knowledge}}                                                                                                                                          \\ \hline
		$\mathcal{K}_{\text{sim}}$               & Vocabulary extracted from $\mathcal{D}_{\text{sim}}$                                                            & $m_{\text{sim}}$                       \\ \hline
		$C_{\text{kw}}$                          & Word-word co-occurrence matrix built from $\mathcal{D}_{\text{sim}}$                                            & $m_{\text{sim}} \times m_{\text{sim}}$ \\ \hline
		$C_{\text{td}}$                          & Trapdoor-trapdoor co-occurrence matrix built from $R_{\mathcal{Q}}$                                             & $l \times l$                           \\ \hline
		$\hat{n}_{\text{real}}$                  & Estimation of $n_{\text{real}}$, the number of documents indexed by the server                                  & not applicable                         \\

		\hline \hline
		\multicolumn{3}{|c|}{\textbf{Unknown by the attacker}}                                                                                                                                              \\ \hline
		$\mathcal{D}_{\text{real}}$              & Encrypted documents indexed by the server                                                                       & $n_{\text{real}}$                      \\ \hline
		$\mathcal{K}_{\text{real}}$              & Queryable vocabulary (i.e., the client's keyword universe)*                                                     & $m_{\text{real}}$                      \\ \hline
		$\mathcal{K}_{\text{real}}(\mathcal{Q})$ & Underlying keywords of the observed queries $\mathcal{Q}$ (i.e., the objective of the attack)*                  & $l$                                    \\ \hline
		\multicolumn{3}{c}{\raggedleft *Actually, the attacker knows a small part of this vocabulary thanks to the known queries}
	\end{tabular}
	\label{tab:notations}
\end{table*}

\subsection{Searchable symmetric encryption (SSE)}
\label{subsec:sse}
From a high-level perspective, the majority of searchable symmetric encryption (SSE) schemes are based on the same design idea introduced by Curtmola et al.~\cite{Curtmola}.
We consider a \emph{document set} $\mathcal{D}$, which consists of documents $d\in\mathcal{D}$ with identifiers $id(d)$. Each document $d$ consists of keywords. If a keyword $x$ occurs in $d$, we denote this as $x\in d$. Now, initially, a client generates an inverted index for \emph{document set} $\mathcal{D}$ that indicates for each keyword in which document it occurs. The document set $\mathcal{D}$ is encrypted on the client side using a secret key and uploaded to a server.
In a second step, the client can then query the encrypted index for single keywords using a trapdoor function, which takes the secret key and a keyword as input and outputs a unique \textit{trapdoor}. We denote as $\text{Trapdoor}(x)$ the trapdoor of the keyword $x$.
When the client searches for a keyword, she sends the corresponding trapdoor to the server.
The server computes the result set using the encrypted index together with the received trapdoor and sends back the matching result set, which consists of the matching encrypted documents and their identifiers.

Here, SSE supports various kinds of document sets such as, for example, a set of emails, a set of articles, and a set of information sheets. The only condition on the document set is that the user can extract keywords. For text files, the procedure is straightforward, but it could also be a tag extraction for videos. In the case of videos or images, the tags would be the subject of the queries. Thus, we can also consider indexing non-textual data.

Depending on the leakage profile of the scheme, the response leaks more or less information to the server. Our attack works on the minimum leakage profile called L1. L1 schemes only unveil the identifiers of the documents containing the keyword queried by the user.

\subsection{Attacker models}
A passive attacker observes the trapdoors sent to the server and the server response, which includes the list of the matching document identifiers.
These identifiers reveal no further information about the content of the document.
The attacker can link a query to its response and create (Trapdoor, DocIDs) pairs. We consider two slightly different attacker models; both are applicable for our attack as described in Section~\ref{sec:improved}:
\begin{itemize}
	\item An \textit{honest-but-curious} server follows the protocols but tries to recover the underlying keywords of the queries. To facilitate search on encrypted documents, the encrypted index is usually supposed to be stored on the server along with the encrypted documents as is the case in settings considered by the IKK and CGPR attacks. Such an attacker owns metadata about the encrypted documents: the total number of documents and their size.
	\item A \textit{passive traffic observer} records the traffic of the database. This adversary only has pairs of $(\text{Trapdoor}, \text{DocIDs})$ and uses them to recover the underlying keywords. It could also represent a case where the index server does not store the encrypted documents. Such an index server ignores the number and the size of the indexed encrypted documents.
\end{itemize}

\subsection{Adversary knowledge}

\paragraph{Similar document set}  The adversary knowledge is focused on a similar document set $\mathcal{D}_{\text{sim}}= \{d_1, \dots ,d_{n_{\text{sim}}}\}$.
A document set $\mathcal{D}_{\text{sim}}$ is similar if it is distributionally close to the indexed documents $\mathcal{D}_{\text{real}}$. Subsection \ref{subsec:def_sim} proposes a formal definition of document set similarity.

In a company, an employee's mailbox is a document set similar to her colleague's mailboxes. As another example, leaked confidential notes are a similar document set to recover the queries on the rest of the notes. Known-data attacks can also work on leaked documents, but the server can simply remove these documents from the index to avoid these attacks. Despite this removal, our ``refined'' attack is still effective. Since it is a similar-data attack, we do not need our documents to be indexed as opposed to the known-data attacks \cite{IKK, cash, blackstonerevisiting} that assume their known documents are part of the document set indexed by the server.

A vocabulary $\mathcal{K}_{\text{sim}}$ of $m_{\text{sim}}$ keywords is extracted from $\mathcal{D}_{\text{sim}}$. An index matrix is built from this document set: $\text{ID}_{\text{sim}} [i,j] = 1$ if the $j$-th keyword is contained in the $i$-th document, $0$ otherwise. Then, the $m_{\text{sim}} \times m_{\text{sim}}$ co-occurrence matrix is $C_{kw} = \text{ID}_{\text{sim}}^\top \text{ID}_{\text{sim}} \cdot \frac{1}{n_{\text{sim}}}$, where $n_{\text{sim}} = |\mathcal{D}_{\text{sim}}|$.\footnote{$A^\top$ denotes the transpose of a matrix $A$} Note that we use relative frequency numbers rather than absolute count numbers.

\paragraph{Keyword extraction algorithm}
The (keyword) distributional knowledge on document sets depends on how keywords are extracted from the documents. The attacker uses an algorithm to extract the vocabulary from her known document set.
In the literature, all attack papers (im- or explicitly) assume that the attacker uses the same keyword extraction algorithm as the client. Whether this assumption is realistic or not has not been questioned in the literature, but we would like to stress the importance of this assumption here. We also assume the attacker and the client use the same extraction algorithm, and we leave the study of the case of different extraction algorithms for future work.

\paragraph{Observed queries}
We denote as $\mathcal{K}_{\text{real}}$ the client's query-able vocabulary of $m_{\text{real}}$ keywords. The adversary does know neither this vocabulary nor its size. The adversary observes $l$ unique trapdoors and obtains their corresponding search results. Let $\mathcal{Q} = \langle td_1,  \dots , td_l \rangle$ be the set of observed trapdoors and $R_{td} = \{id(d)| (x \in \mathcal{K}_{\text{real}})\wedge(td=\text{Trapdoor}(x))\wedge(d \in \mathcal{D}_{\text{real}})\wedge(x \in d)\}$, the document identifiers returned for the trapdoor $td$.

Let $\text{DocIDs} = \{id_1,  \dots , id_p\} = \bigcup_{td \in \mathcal{Q}} R_{td}$. We note that $p \le n_{\text{real}}$.
Let $\text{ID}_{\text{real}} $ be the $p \times l$ index matrix built from the responses to the trapdoors as follows: $\text{ID}_{\text{real}} [i,j] = 1$ if the response to the $j$-th trapdoor contains the $i$-th identifier, $0$ otherwise. Finally, we can infer the $m_{\text{real}} \times m_{\text{real}}$ trapdoor co-occurrence matrix: $C_{td} = \text{ID}_{\text{real}}^\top \text{ID}_{\text{real}} \cdot \frac{1}{n_{\text{real}}}$. The estimation of $n_{\text{real}}$ is presented in Appendix \ref{estimate_n}.

\paragraph{Known queries}
As in IKK and CGPR, the adversary knows the underlying keywords of $k$ queries in $\mathcal{Q}$. The set of known queries is defined as follows:

\begin{equation*}
	\begin{split}
		\text{Known}\mathcal{Q} = \{&\langle kw_{\text{known}}, td_{\text{known}}\rangle | (kw_{\text{known}}\in \mathcal{K}_{\text{real}}\cap\mathcal{K}_{\text{sim}}) \\ & \wedge (td_{\text{known}} \in \mathcal{Q}) \wedge (td_{\text{known}}=\text{Trapdoor}({kw_{\text{known}}}))\}
	\end{split}
\end{equation*}

\subsection{Similarity definitions}
\label{subsec:def_sim}
\paragraph{Similar document set}

Let $\mathcal{C}(\mathcal{D}, kw_a, kw_b)$ be the function returning the number of co-occurrences of keywords $kw_a$ and $kw_b$ inside the document set $\mathcal{D}$. Let $\text{SimMat}(\mathcal{D}_1,  \mathcal{D}_2, \mathcal{K})$ be a function returning an $m\times m$ similarity matrix of $\mathcal{D}_1$ and $\mathcal{D}_2$ over the vocabulary $\mathcal{K} = \{kw_1, \dots kw_i, \dots kw_m\}$. The function $\text{SimMat}$ is defined such that:
\begin{equation}
	(\text{SimMat}(\mathcal{D}_1,  \mathcal{D}_2, \mathcal{K}))_{ij} = \frac{\mathcal{C}(\mathcal{D}_1, kw_i, kw_j)}{|\mathcal{D}_1|} - \frac{\mathcal{C}(\mathcal{D}_2, kw_i, kw_j)}{|\mathcal{D}_2|}
\end{equation}

In other words, the $ij$-th element of the matrix returned by $\text{SimMat}$ is the difference between the co-frequency of the keywords $i$ and $j$ in the document set $\mathcal{D}_1$ and the co-frequency of the same two keywords in the document set $\mathcal{D}_2$. Thus, $\text{SimMat}(\mathcal{D}_1,  \mathcal{D}_2, \mathcal{K})$  describes the similarity of $\mathcal{D}_1$ and $\mathcal{D}_2$ over the vocabulary $\mathcal{K}$ and the norm $||\text{SimMat}(\mathcal{D}_1,  \mathcal{D}_2, \mathcal{K})||$ is a measure of the similarity of $\mathcal{D}_1$ and $\mathcal{D}_2$. In this paper, we consider the Frobenius norm, but other norms can be considered as well.

We define $\mathcal{D}_{\text{sim}}$ and $\mathcal{D}_{\text{real}}$ as two {\it $\epsilon$-similar document sets} if for $\epsilon \ge 0$ the following holds:

\begin{equation}
	\label{eq:def_sim}
	||\text{SimMat}(\mathcal{D}_{\text{sim}},  \mathcal{D}_{\text{real}}, \mathcal{K}_{\text{real}})|| \le \epsilon
\end{equation}

In other words, the closer the co-frequencies between the document sets are, the more similar the document sets are. In our definition, we only need to consider the similarity over the queryable vocabulary $\mathcal{K}_{\text{real}}$ because those are the only keywords that are queried for by the client and to be recovered by the attacker.

\paragraph{Similar and queryable vocabularies}
To recover the queries, the attacker needs to have as many elements of the queryable vocabulary $\mathcal{K}_{\text{real}}$ as possible in her similar vocabulary $\mathcal{K}_{\text{sim}}$. This creates a natural upper bound for the attack accuracy:

\begin{equation}
	\label{eq:acc_upper_bound}
	\text{AttackAccuracy} \le \frac{|\mathcal{K}_{\text{real}} \cap \mathcal{K}_{\text{sim}}|}{|\mathcal{K}_{\text{real}}|}
\end{equation}

In other words, the attacker can only recover the queries for which the underlying keywords are contained in $\mathcal{K}_{\text{sim}}$.

In the experiments presented in Section \ref{sec:improved}, $\mathcal{K}_{\text{sim}}$ contains most elements of $\mathcal{K}_{\text{real}}$ since the average accuracy goes up to 95\% which means that more than 95\% of the keywords of the queryable vocabulary $\mathcal{K}_{\text{real}}$ are contained in the similar vocabulary $\mathcal{K}_{\text{sim}}$.

\paragraph{Attacker assumptions} An attacker knows neither the indexed documents $\mathcal{D}_\text{real}$ nor the vocabulary $\mathcal{K}_{\text{real}}$. Thus, she cannot calculate the exact $\epsilon$-similarity between her dataset $\mathcal{D}_\text{sim}$ and the indexed dataset $\mathcal{D}_\text{real}$. We \emph{assume} that:
\begin{enumerate}
	\item $\mathcal{D}_\text{sim}$ is $\epsilon$-similar to the indexed document set $\mathcal{D}_\text{real}$, for a sufficiently small $\epsilon$ (e.g. $\epsilon = 0.8$ as in the Figure \ref{fig:sim_res}).
	\item $\mathcal{K}_\text{sim}$ contains most elements of $\mathcal{K}_\text{real}$ (especially $\mathcal{K}_\text{real}(\mathcal{Q})$, the underlying keywords of the queries observed by the attacker).
\end{enumerate}

\section{Score attack}
On an intuitive level, our {\it score attack} uses a confidence metric that scores trapdoor-keyword pairs. This metric is called a \textit{matching score} and should be maximized when the trapdoor is paired with its correct underlying keyword. The attacker computes the matching score of every possible trapdoor-keyword pair. For each trapdoor, the trapdoor-keyword pair with the highest score is returned.

\subsection{Extracting the known query co-occurrence sub-matrices}
\label{subsec:sub-matrices}
The attacker uses her known queries (= known correct trapdoor-keyword pairs) to project the keyword and the trapdoor co-occurrence matrices to a common sub-vector space. This projection is done by only keeping the columns of the known queries and sorting the columns using the known queries such that the $i$-th column is related to the $i$-th known query. Formally the projection works as follows:

For a keyword $kw$, we denote its position in the vocabulary $\mathcal \mathcal{K}_{sim}=(kw_1,\ldots, kw_{m_{sim}})$ by $pos(kw)$, i.e., $pos(kw_i)=i$ for $kw_i\in \mathcal{K}_{sim}$.
Likewise, we denote the position of a trapdoor $td$ in the list of observed queries $\mathcal \mathcal{Q}=\langle td_1,\ldots,td_l\rangle$ by $pos(td)$.
We define the {\it projection of the trapdoor-trapdoor co-occurrence matrix $C_{td}$ onto the known queries} as the $l\times k$ matrix $C^s_{td}$ such that:
for all $i \in \{1 \dots l\}$ and all $j \in \{1 \dots k\}$ there exists a known query $q_j=(td_{\text{known}}, kw_{\text{known}}) \in \text{Known}\mathcal{Q}$ such that
\begin{equation}
	C^s_{td}[i,j] = C_{td}[i,pos(td_{\text{known}})].
\end{equation}

Likewise, we define the {\it projection of the word-word co-occurrence matrix $C_{kw}$ onto the known queries} as the $m_{sim}\times k$ matrix $C^s_{kw}$ such that:
for all $i \in \{1 \dots m_{\text{sim}}\}$ and all $j \in \{1 \dots k\}$ there exists a known query $q_j=(td_{\text{known}}, kw_{\text{known}}) \in \text{Known}\mathcal{Q}$ such that
\begin{equation}
	C^s_{kw}[i,j] = C_{kw}[i, pos(kw_{\text{known}})].
\end{equation}

In our notation, we use the superscript $s$ to emphasize that $C^s_{td}$  and $C^s_{kw}$ define co-occurrence \textit{sub}-matrices.
Such matrices are very convenient since we can directly compare a keyword and a trapdoor. We denote as $C^s_{kw}[kw_i]$ (resp. $C^s_{td}[td_j]$), the vector composed of the co-occurrences of keyword $kw_i$ (resp. trapdoor $td_j$) with every keyword (resp. trapdoor) related to a known query. Thus, we can extract a $k$-dimensional vector describing each keyword or trapdoor. In the next section, we will define our confidence score based on the distance between a keyword vector and a trapdoor vector.

\subsection{Confidence score and matching process}
The sub-matrices $C^s_{kw}$ and $C^s_{td}$ are used to score a trapdoor-keyword pair. A score should be maximized when the pair is correct. The {\it scoring function} for a vector-norm $\|\cdot \|$ (e.g. the L2 norm) is defined as
\begin{equation}\label{eq:score}
	\text{Score}(td_j,kw_i) = -\ln(||C^s_{kw}[kw_i] - C^s_{td}[td_j]||),
\end{equation}
for all $kw_i \in \mathcal{K}_{\text{sim}}$ and all $td_j \in \mathcal{Q}$.

Note that Equation~\eqref{eq:score} results in a high score when the distance between a keyword and a trapdoor is small. This distance can be obtained since $C^s_{kw}[kw_i]$ and $C^s_{td}[td_j]$ share a common vector space.
The $-\ln$ function is used to transform the distance into a score to focus on the order of magnitude instead of distance values close to zero.
For example, a distance of $e^{-11}$ results in a score of $11$.
In our case, the distance is always less than 1 because we are using relative frequency matrices.
We focus on orders of magnitude for interpretability matters.
Even for an attack accuracy above 80\%, the attacker needs to identify the correct predictions.
The interpretability provided by a scoring approach is then necessary. We argue that there is a higher interpretative meaning when comparing two orders of magnitude scaled between 0 and $\sim 30$ (experimental upper bound) than comparing two small norms close to zero.
Moreover, in Section~\ref{sec:improved} and in Appendix~\ref{app:clustering}, we propose geometrical methods (focusing on the distance between the scores) to improve the results of the score attack presented in this section.

Our score attack uses the score function as follows: for each trapdoor, it goes through all keywords and returns the keyword for which the score is maximized. See Figure \ref{base_alg} for an algorithmic description.
Note that the score is a confidence score, and the attacker can sort the predictions (i.e., the trapdoor-keyword pairs returned) based on their matching score to define the most likely predictions.

The norm $||\cdot||$ can be chosen freely.
However, our experiments showed that the L2 norm maximizes the accuracy.
This norm tolerates a high difference between one of the $k$ components of the vector, i.e., when one of the co-occurrences in the attacker dataset is very far from its corresponding co-occurrence in the indexed dataset.

\begin{figure}[h]
	\begin{algorithmic}
		\REQUIRE $\mathcal{K}_{\text{sim}}, C^s_{kw}, \mathcal{Q}, C^s_{td}$
		\STATE $pred \leftarrow []$
		\FORALL{$td \in \mathcal{Q}$}
		\STATE $\text{candidates} = []$
		\FORALL{$kw \in \mathcal{K}_{\text{sim}}$}
		\STATE $s = -\ln(||C^s_{kw}[kw] - C^s_{td}[td]||)$
		\STATE append $(kw, s)$ to candidates
		\ENDFOR
		\STATE $\text{candidates} = sort(\text{candidates}, desc)$
		\STATE append $(td, candidates[0])$ to $pred$
		\ENDFOR
		\RETURN $pred$
	\end{algorithmic}
	\caption{The score attack}
	\label{base_alg}
\end{figure}

\section{Experimental results}
\subsection{Methodology}
\paragraph{Datasets}
We simulate the attacks using two publicly available datasets.
First, we use the Enron dataset \cite{enron}, which is widely used in the literature to simulate attacks. Like in IKK and CGPR, we compose our Enron document set by extracting every email contained in the folders \textit{\_sent\_mail} to obtain $30 109$ documents.
Second, we use data from the Apache mailing list archives\footnote{\url{http://mail-archives.apache.org/mod_mbox/lucene-java-user/}}. We use specifically the "java-user" mailing list from the Lucene project for the years 2002-2011. This second dataset contains $50 878$ documents. It was introduced in CGPR.

\paragraph{Keyword extraction}
The keyword extraction is exclusively done on the email content. Thus, keywords of the title or the recipients' names cannot be queried. To obtain the list of the keywords, we stem the words using the Porter Stemmer\cite{porter1980algorithm} and remove the stop words. For the Apache dataset, we systematically remove the mailing list signature proposing to unsubscribe. Otherwise, the keyword contained in this "Unsubscribe" message would be useless in the search since it appears in every email.

\paragraph{Adversary knowledge generation}
At the beginning of an experiment, the document set used (i.e., Enron or Apache) is divided randomly into two disjoint subsets.
One subset is used to generate the index, i.e., the encrypted document set $\mathcal{D}_{\text{real}}$.
The second subset is part of the adversary knowledge, i.e., the similar document set $\mathcal{D}_{\text{sim}}$.
Every experiment is done with non-overlapping document sets, that is, only as similar-data attack.
The similar vocabulary $\mathcal{K}_{\text{sim}}$ is extracted from $\mathcal{D}_{\text{sim}}$ and is given to the adversary.
The index vocabulary $\mathcal{K}_{\text{real}}$ is extracted from $\mathcal{D}_{\text{real}}$ and is not known by the adversary. The similar (resp. real) vocabulary extraction algorithm consists in extracting the $m_\text{sim}$ (resp. $m_\text{real}$) most frequent keywords of $\mathcal{D}_{\text{sim}}$ (resp. $\mathcal{D}_{\text{real}}$).
The underlying keywords of the queries are chosen uniformly at random from $\mathcal{K}_{\text{real}}$.
For each run, document and query sets are freshly chosen uniformly at random.

\paragraph{Hardware and software}
The experiments are done on a Debian server with a quad-core processor (64 bits, 2.1 GHz) and 8 GB of memory.
The algorithms are implemented using Python 3.7. Specifically, we use the NLTK \cite{loper2002nltk} for the basic natural language processing: word tokenization, stemming, and stopwords.

Our code to simulate the attack and to obtain our results is publicly available: \url{https://github.com/MarcT0K/Refined-score-atk-SSE}.

\paragraph{Result presentation}
We call \textit{correct prediction}, a query for which the algorithm has returned the corresponding underlying keyword. We evaluate the performance of our attack using the term \textit{accuracy}. The accuracy corresponds to the number of correct predictions divided by the number of unknown queries.
Our accuracy excludes known queries and is always computed over the unknown queries (i.e., $ acc= \frac{|\text{CorrectPred}(\text{Unknown}\mathcal{Q})|}{|\mathcal{Q}|-|\text{Known}\mathcal{Q}|}$). In other works such as CGPR, the term \textit{recovery rate} is also used to define this concept.
We use bar plots to present the results of our experiments. Each bar is obtained by computing the average result over 50 attack simulations. These bars are completed with error bars corresponding to $\mu \pm \sigma$ with $\mu$ the average accuracy and $\sigma$ the standard deviation.

\subsection{Results}
Figure \ref{fig:base_res} shows the accuracy of the algorithm on Enron corpus for several vocabulary sizes. The server stores 60\% of the corpus and the adversary knows the remaining 40\%. The adversary has observed 15\% of the possible queries.
She knows either 15, 30, or 60 queries. When the vocabulary size is 1000 and the adversary knows 30 queries (20\% of the queries observed), the average accuracy is 60\%. When the vocabulary size is 2000 and the adversary knows 60 queries (20\% of the queries observed), the average accuracy is 55\%.
Then, the base algorithm can be successful only if it has enough known queries. When the vocabulary is larger, the accuracy decreases if the number of known queries remains the same.
The accuracy increases in function of the number of known queries (Figure \ref{fig:base_res}). We argue that we could obtain better results for large vocabularies with more known queries. Obtaining so many queries is unrealistic since we would need a preliminary inference attack or a massive injection attack to obtain the knowledge required by this base similar-data attack. Therefore, the base score attack is only a practical attack when the size of the server vocabulary is below 2000.

In Figure \ref{fig:base_res}, we can distinguish one surprising result when there are 60 known queries, and the vocabulary size is 500. This result is decreased compared to the previous one (i.e., 30 known queries), and the error bars are overlapping. We can explain that because, in this experiment, there are 75 queries for 60 known queries, i.e., only 15 unknown queries. It is the only experiment where there is a minority of unknown queries. We consider this result insignificant but keep it in our work for the complete discussion. Indeed, it does not make sense to know most of the queries before the attack and to consider the result obtained on recovering such a small amount of unknown queries.

\begin{figure}
	\includegraphics[width=\linewidth]{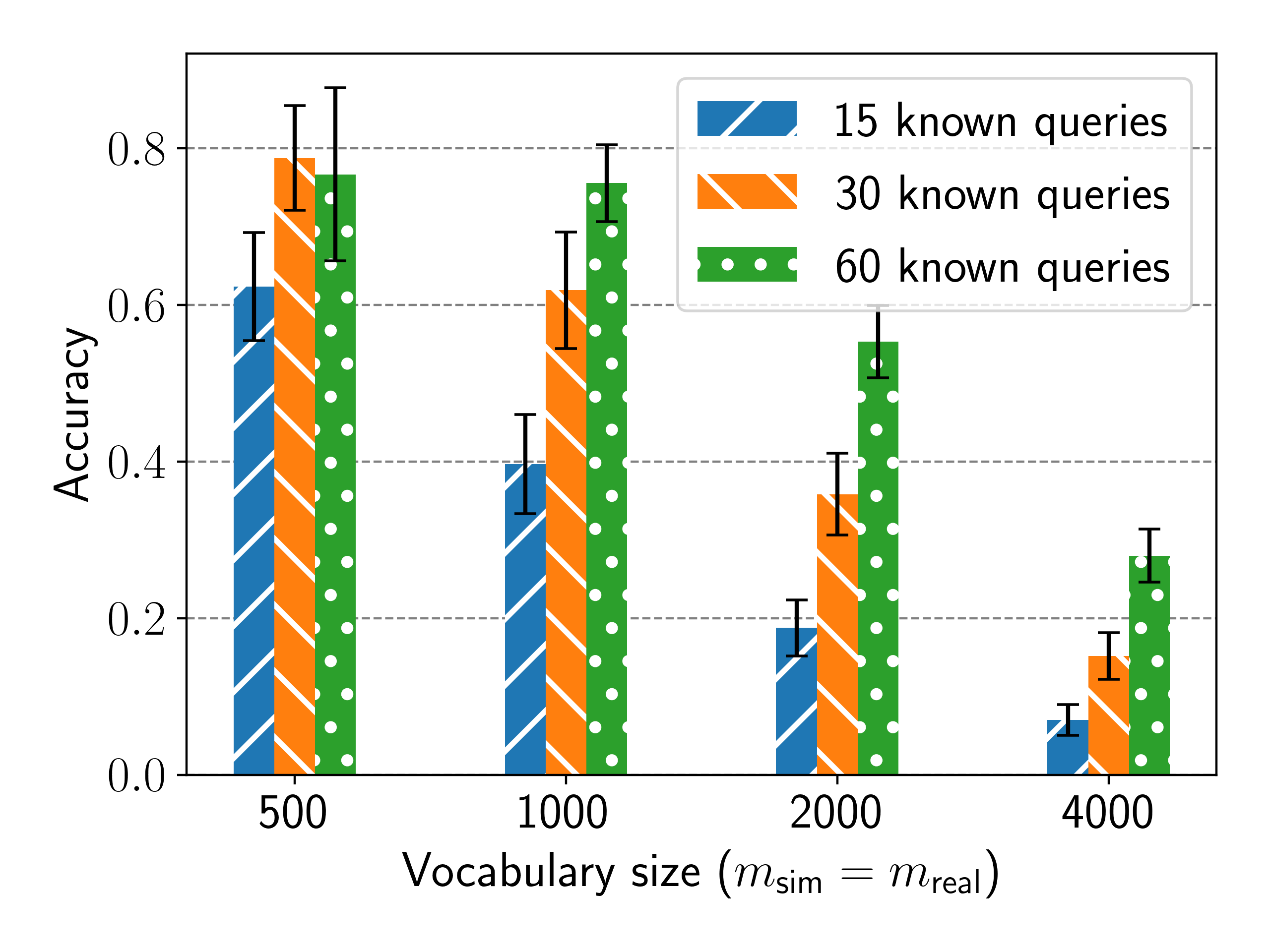}
	\caption{Average accuracy of the base matching algorithm on Enron for vocabulary size. Parameters: $|\mathcal{D}_{\text{sim}}|=12K, |\mathcal{D}_{\text{real}}|=18K, |\mathcal{Q}|=0.15\cdot m_{\text{real}}$}
	\label{fig:base_res}
\end{figure}

In \cite{cash}, it was assumed that the co-occurrences were too noisy to rely totally on them. The occurrence is much less noisy, but most of the keywords have an extremely close frequency. Therefore, it is impossible to identify a keyword just based on a single occurrence estimator except if we are sure to have perfect knowledge as in CGPR attack when they know nearly 100\% of the encrypted documents. The co-occurrence is noisier, but its distribution is scattered enough to identify keyword-trapdoor pairs. The lack of precision is balanced by the number of co-occurrences available to perform the identification. There is a trade-off between the number of estimators and the precision.

\subsection{Execution time}The complexity in time of the algorithm in Figure \ref{base_alg} is $\mathcal{O}(|\mathcal{Q}|\cdot m_{\text{sim}}\cdot k)$, if we consider the complexity of the norm $||\cdot||$ to be $\mathcal{O}(k)$. Figure \ref{fig:exec_time} describes the average execution time of this algorithm over 50 repetitions in function of the vocabulary size. We exclude the keyword extraction and the co-occurrence computation from this execution time. This experiment was done with Enron dataset. The similar document set represents 40\% of the total dataset (12 044 emails), and the server document set 60\%. Even with large document sets, the execution time is negligible (20 seconds) compared to attacks like \cite{pouliot2016shadow}, which needs 16 hours when $m_{\text{sim}}=m_{\text{real}}=1K$. Our implementation is already CPU-parallelized but can be further improved using GPU parallelization.

\begin{figure}
	\includegraphics[width=\linewidth]{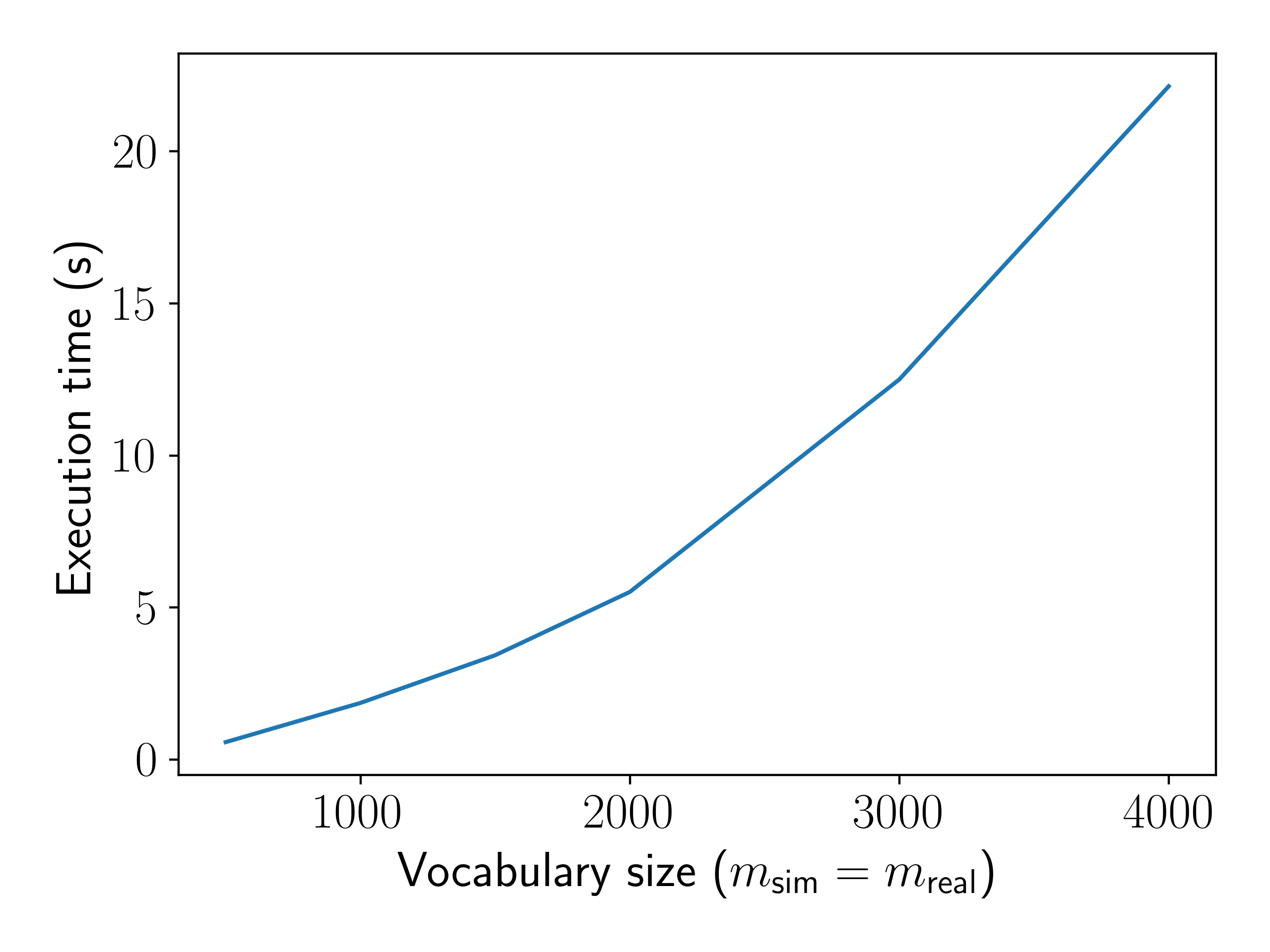}
	\caption{Average execution time of the matching algorithm for vocabulary size Parameters: $|\mathcal{D}_{\text{sim}}|=12K, |\mathcal{D}_{\text{real}}|=18K, |\mathcal{Q}|=0.15\cdot m_{\text{real}}, |\text{Known}\mathcal{Q}|=10$}
	\label{fig:exec_time}
\end{figure}

Besides its short runtime, this algorithm is deterministic and parameter-less.
Non-determinism is present, for example, in the simulated annealing used by IKK. Indeed, two runs of IKK algorithm could result in two different results because of a random choice present in this algorithm. It becomes a problem when the attack is too long to be repeated many times with different initializations and/or when the attacker does not have a confidence metric to identify the correct predictions (as in IKK).
The CGPR attack introduced an error-rate parameter, which needs to be set experimentally. However, it is unclear whether this parameter is specific to each document set or not and how to set it properly.

\section{Refined score attack}
\label{sec:improved}
\subsection{Algorithm}
Our base attack introduces a matching score that acts as a confidence metric: the higher the score is, the more likely the correctness of the keyword-trapdoor pair is.
We can use this property to determine the most certain predictions: a keyword-trapdoor candidate $(kw_i, td)$ will be considered as {\it certain} if its score is much higher than the scores of any other candidate $(kw_j, td)$.
The {\it certainty of a prediction $kw_i$ for the trapdoor $td$} is defined as:
\begin{equation}
	\text{Certainty}(td, kw_i) = \text{Score}(td, kw_i) - \max_{j \neq i} \text{Score}(td, kw_j)
\end{equation}

Based on this certainty, we propose a refinement process, which drastically reduces the number of known queries needed on the attacker's side: the matching is performed several times, and at the end of each round, the most certain predictions are added to the set of known queries. We detail this process in Figure~\ref{alg:refined_score_atk}. This algorithm introduces a new parameter: the refinement speed RefSpeed to decrease attack runtime.
However, if the refinement speed is chosen too large, it is very likely that wrong predictions are added to the known queries.
Overall, the time complexity of the algorithm in Figure \ref{alg:refined_score_atk} is $\mathcal{O}(\frac{|\mathcal{Q}|}{\text{RefSpeed}}\cdot |\mathcal{Q}|\cdot m_{\text{sim}}\cdot k)$. Notice that the runtime of the refined score attack increases by the factor $\frac{|\mathcal{Q}|}{\text{RefSpeed}}$ in comparison to the base attack, thus RefSpeed decreases the runtime by a multiplicative factor.
As a result, the refined score attack is finished in minutes, whereas Pouliot and Wright's attack \cite{pouliot2016shadow} requires several hours.

\begin{figure}
	\begin{algorithmic}
		\REQUIRE $\mathcal{K}_{\text{sim}}, C^s_{kw}, \mathcal{Q}, C^s_{td}, \text{Known}\mathcal{Q}, \text{RefSpeed}$

		\STATE $final\_pred \leftarrow []$
		\STATE $\text{unknown}\mathcal{Q} \leftarrow \mathcal{Q}$

		\WHILE{$\text{unknown}\mathcal{Q} \neq \emptyset$}
		\STATE \% 1. Extract the remaining unknown queries
		\STATE $\text{unknown}\mathcal{Q} \leftarrow \{td: (td \in \mathcal{Q}) \wedge (\nexists kw \in \mathcal{K}_{\text{sim}}: (td,kw) \in \text{Known}\mathcal{Q})\}$

		\STATE $temp\_pred \leftarrow []$

		\STATE
		\STATE \% 2. Propose a prediction for each unknown query
		\FORALL{$td \in \text{unknown}\mathcal{Q}$}
		\STATE $cand \leftarrow []$
		\COMMENT{The candidates for the trapdoor $td$}
		\FORALL{$kw \in \mathcal{K}_{\text{sim}}$}
		\STATE $s \leftarrow -\ln(||C^s_{kw}[kw] - C^s_{td}[td]||)$
		\STATE append \{"kw": $kw$, "score": $s$\} to $cand$
		\ENDFOR
		\STATE Sort $cand$ in descending order according to the score.
		\STATE $\text{certainty} \leftarrow \text{score}(cand[0]) - \text{score}(cand[1])$
		\STATE append $(td, \text{kw}(cand[0]), \text{certainty})$ to $temp\_pred$
		\ENDFOR
		\STATE
		\STATE \% 3. Either stop the algorithm or keep refining.
		\IF {$|\text{unknown}\mathcal{Q}| < \text{RefSpeed}$}
		\STATE $final\_pred \leftarrow \text{Known}\mathcal{Q} \cup temp\_pred$
		\STATE $\text{unknown}\mathcal{Q} \leftarrow \emptyset$
		\COMMENT{Stopping criteria}
		\ELSE
		\STATE Append the RefSpeed most certain predictions from $temp\_pred$ to  $\text{Known}\mathcal{Q}$
		\STATE Add the columns corresponding to the new known queries to $C^s_{kw}$ and $C^s_{td}$
		\ENDIF
		\ENDWHILE

		\RETURN $final\_pred$
	\end{algorithmic}
	\caption{The refined score attack}
	\label{alg:refined_score_atk}
\end{figure}
Each iteration of this algorithm is divided into three phases:
\begin{enumerate}
	\item Remove all (attacker-)known queries from the queries to be recovered.
	\item For each unknown query, find the best matching keyword candidate and compute its certainty score.
	\item \emph{If} there are more than RefSpeed unknown queries, we keep refining the results: the most certain predictions (RefSpeed many) are added to the known queries. Then, add the columns corresponding to the new known queries to the co-occurrence sub-matrices ($C^s_{kw}$ and $C^s_{td}$) and start a new iteration. \emph{Otherwise}, the algorithm stops and returns the queries imputed since the first iteration.
\end{enumerate}

The main benefit of the refined score algorithm is the use of more information available to the adversary. The initial score attack only uses a small part of the co-occurrence matrices (i.e., the co-occurrence sub-matrices). This refinement iteratively imputes new known queries which increases the size of these sub-matrices. We optimize the use of the adversary knowledge (the co-occurrence matrices) in order to minimize the amount needed for a successful attack (the known queries).

\subsection{Experimental results}
\label{subsec:exp_res_refined_score}
\paragraph{General comparison}
\begin{figure}
	\includegraphics[width=\linewidth]{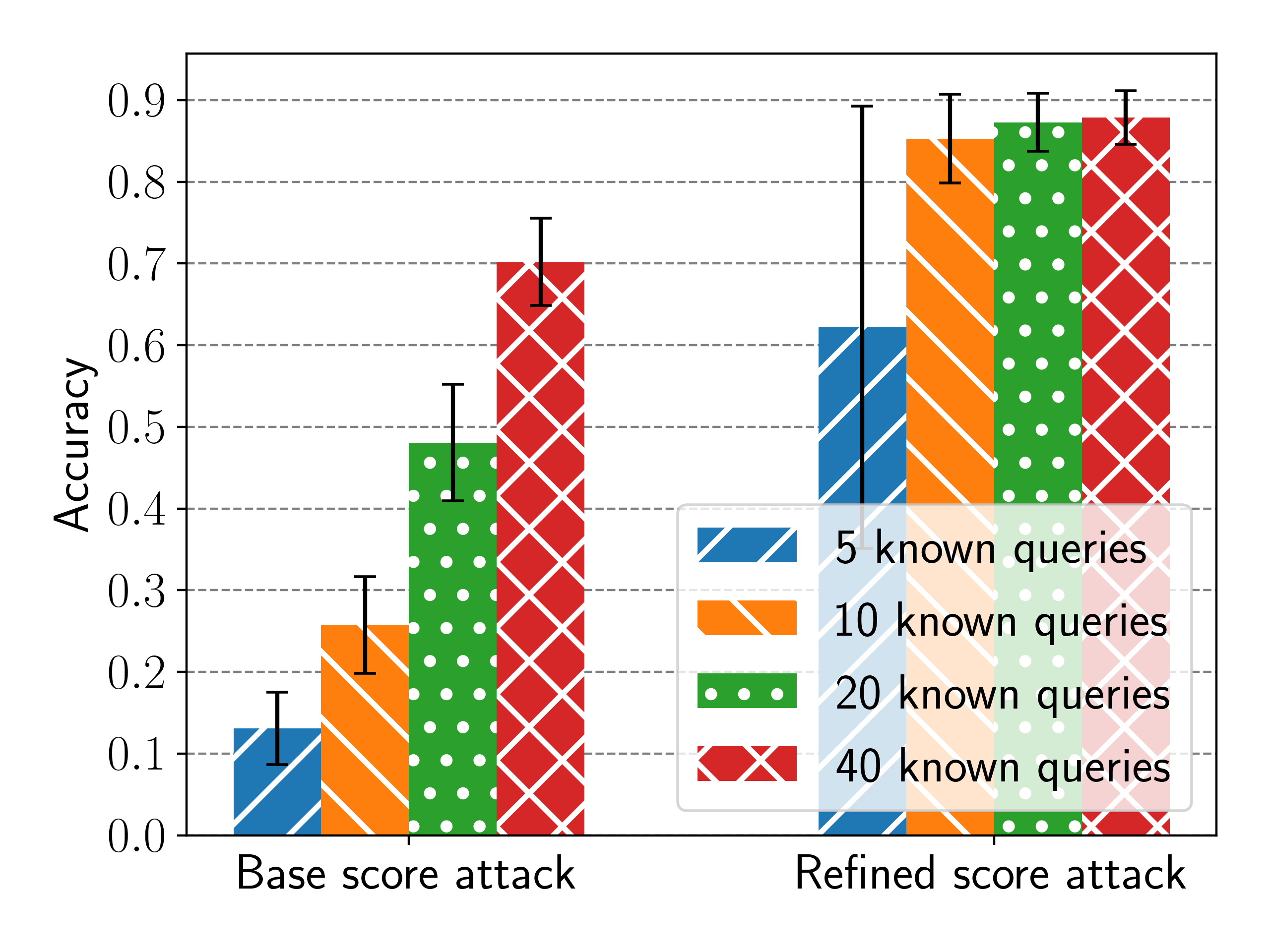}
	\caption{Comparison of the accuracy between the score attack and the refined score attack. Fixed parameters: $|\mathcal{D}_{\text{sim}}|=12K, |\mathcal{D}_{\text{real}}|=18K, m_{\text{sim}}=1.2K, m_{\text{real}}=1K, |\mathcal{Q}|=150, \text{RefSpeed}=10$}
	\label{fig:general_res}
\end{figure}
In Figure \ref{fig:general_res}, we compare the accuracy of the base and the refined versions of the attack. We fixed $m_{\text{real}} = 1K$ and $|\mathcal{Q}|=150$. Each bar represents the average accuracy over 50 simulations done with the same parameters. In Figure \ref{fig:general_res}, we see that the base algorithm needs 40 known queries to reach 70\% of accuracy while the refined score algorithm reaches 85\% with only 10 known queries. Even with only 5 known queries, the refined score algorithm achieves 62\% of accuracy with a standard deviation of 13 percentage points.

If not stated differently, we fix $|\mathcal{K}_{\text{sim}}|=|\mathcal{K}_{\text{real}}|$ to simplify the experiment understanding. However, it is likely that the similar vocabulary and the queryable vocabulary are not identical.
For the experimental results depicted in Figure \ref{fig:general_res}, we used $|\mathcal{K}_{\text{sim}}|=1.2K$ and $|\mathcal{K}_{\text{real}}|=1K$. By choosing the vocabulary sizes such that $|\mathcal{K}_{\text{sim}}| > |\mathcal{K}_{\text{real}}|$, we increase the probability that $\mathcal{K}_{\text{sim}} \cap \mathcal{K}_{\text{real}} = \mathcal{K}_{\text{real}}$.
In this case, all queries can be recovered theoretically.
In other words, as the size of $\mathcal{K}_{\text{sim}}$ increases, the accuracy upper bound as stated in Equation~\eqref{eq:acc_upper_bound} in Subsection~\ref{subsec:def_sim} potentially increases.

The refined score attack yields highly accurate results within minutes.
It recovers most of the queries and assumes less adversary knowledge than IKK and CGPR attacks.
In~\cite{cash}, Cash et al. report the average accuracy of the IKK attack is around 30\% for an attacker knowing 95\% of the indexed documents for $|\mathcal{K}_{\text{real}}|=500,|\mathcal{Q}|=150, \text{Known}\mathcal{Q}=8$.
With the same parameters, CGPR achieves 70\% accuracy.
In Figure~\ref{fig:general_res}, we see that for a vocabulary size twice as large and less known queries, i.e., $|\mathcal{K}_{\text{real}}|=1K,|\mathcal{Q}|=150, \text{Known}\mathcal{Q}=10$, the refined score attack also obtains 85\% \emph{without partial knowledge} of the encrypted documents.

\paragraph{Query set size}
\begin{figure}[t]
	\includegraphics[width=\linewidth]{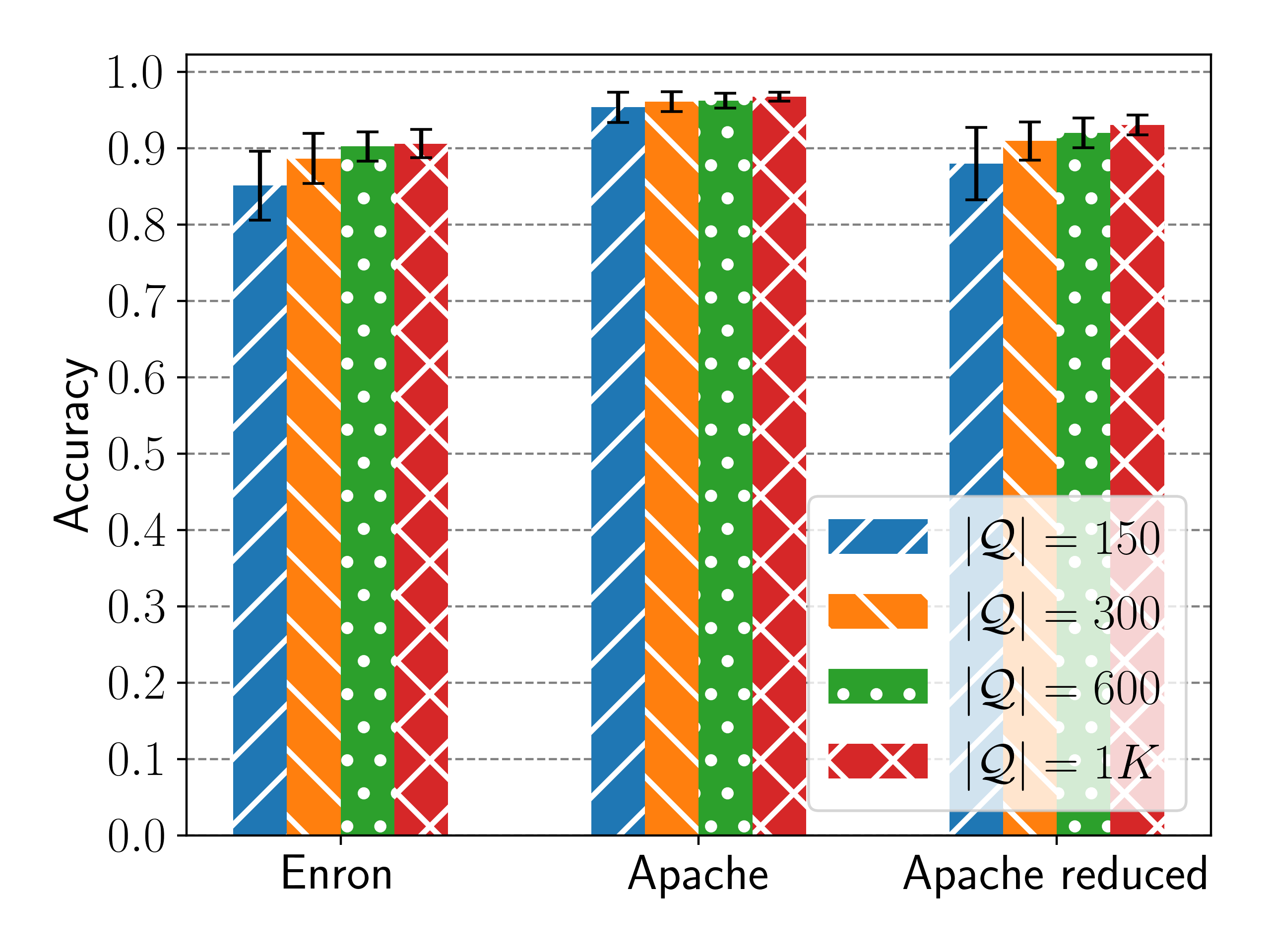}
	\caption{Comparison of the accuracy between Enron, Apache and 'Apache reduced'. Fixed parameters: $|\mathcal{D}_{\text{sim}}|=12K, |\mathcal{D}_{\text{real}}|=18K, m_{\text{sim}}=m_{\text{real}}=1K, |\text{Known}\mathcal{Q}|=15, \text{RefSpeed}=10$}
	\label{fig:enron_apache}
\end{figure}

In both the IKK and CGPR attacks, the number of observed queries was set as 15\% of all possible queries.
We investigate the role of this choice in Figure~\ref{fig:enron_apache}.
With a fixed number of known queries, the attack accuracy increases with a larger query set.
Intuitively, this demonstrates that our attack efficiently uses adversary knowledge, i.e., more observed queries imply more adversary knowledge.
In contrast, both IKK and CGPR had steady results even for an increasing number of known queries, indicating some inefficient use of adversary knowledge.

\paragraph{Different email corpus}
We compare the accuracy on Enron and on Apache in Figure \ref{fig:enron_apache}. For these simulations, we used three document sets: Enron ($|D_{\text{sim}}|=12K, |D_{\text{real}}|=18K$), Apache ($|D_{\text{sim}}|=20K, |D_{\text{real}}|=30K$) and 'Apache reduced' ($|D_{\text{sim}}|=12K, |D_{\text{real}}|=18K$). 'Apache reduced' is the Apache dataset truncated in order to have as many emails as in Enron. Apache has slightly better results than Enron while the emails contain a richer vocabulary and longer emails. Thus, our attack could be effective on a wide range of documents. Moreover, the bar plot shows that 'Apache reduced' has lower results than Apache. Our results on 'Apache reduced' are closer to those on Enron. Since Apache and 'Apacha reduced' share a common distribution and only differ in size, it shows (once again) that our attack is sensitive to the amount of adversary knowledge. In this case, the part of the adversary knowledge which is increased is the similar document set.

\paragraph{Document set similarity}
Recall the similarity definition for the document sets from Subsection~\ref{subsec:def_sim}.
For a better understanding of this new definition, we performed two experiments:

\begin{enumerate}
	\item Using Enron as an attacker document set and Apache as an indexed document set
	\item Fixing the size of the indexed document set and attacking it with similar document sets of varying size.
\end{enumerate}

During the first experiment, over 50 repetitions, we recovered at best 5 queries. This bad performance is explained by the fact that the Enron dataset has a low similarity with the Apache dataset ($\epsilon=10.2$). Further, the attacker vocabulary ($\mathcal{K}_\text{sim}$) and the queryable vocabulary ($\mathcal{K}_\text{real}$) only have 56\% of their keywords in common.
In comparison, Figure~\ref{fig:query_distrib} shows results for experiments where the attacker and the server share up to 98\% of their vocabulary.
Recall from Subsection~\ref{subsec:def_sim} that the joint keywords are an upper bound for the attack accuracy.
The disjoint vocabulary set combined with the high $\epsilon$ value between Enron and Apache explains the low attack accuracy for the first experiment.
Enron is composed of emails sent in a company, while Apache is composed of emails from a mailing list dedicated to a highly technical project. This important difference in the nature of the emails results in two very different keyword distributions (i.e., a very low similarity between the document sets).

\begin{figure}
	\includegraphics[width=\linewidth]{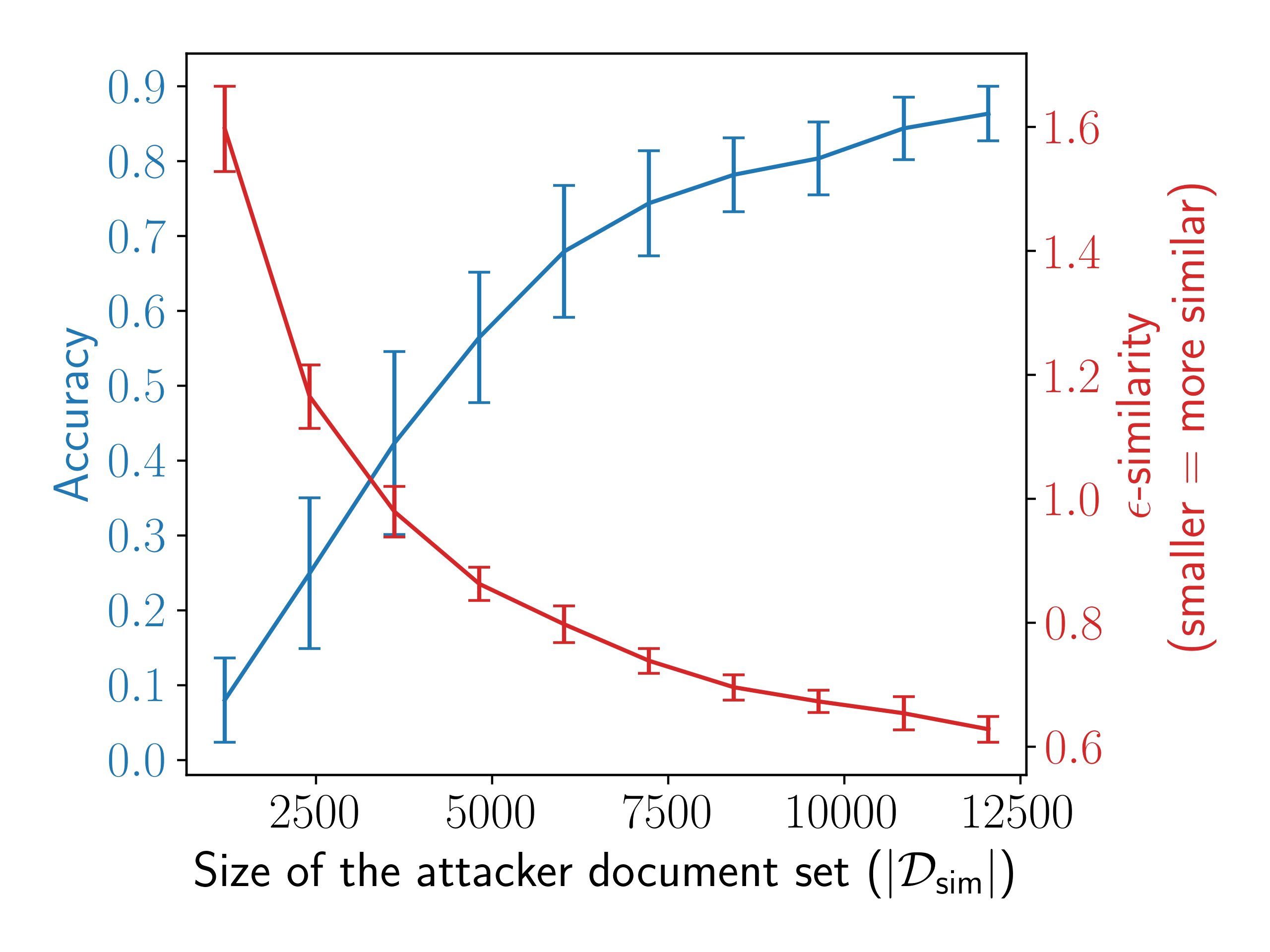}
	\caption{Accuracy and $\epsilon$-value of set similarity for varying attacker document set sizes with Enron. Fixed parameters: $|\mathcal{D}_{\text{real}}|=18K, m_{\text{sim}}=m_{\text{real}}=1K, |\mathcal{Q}|=150, |\text{Known}\mathcal{Q}|=15$}
	\label{fig:sim_res}
\end{figure}

We show the results of the second experiment in Figure~\ref{fig:sim_res}. By varying the size of the attacker dataset, the co-occurrence matrices of the attacker dataset and the indexed dataset diverge more or less.
In other words, this size reduction applies noise to the attacker's word-word co-occurrence matrix compared to the one computed with the complete dataset.
We preferred to apply this size reduction instead of applying a synthetic gaussian noise (as is done in the IKK attack paper) to the matrix because the added noise is more realistic this way.
Figure~\ref{fig:sim_res} shows that reducing the attacker document set size leads to increased $\epsilon$ values
Hence, Ii is an efficient way to decrease similarity.

In Figure~\ref{fig:sim_res}, we observe that the smaller the attacker dataset is, the less similar the document sets are. Hence, the less accurate the attack results are.
When the attacker dataset size is divided by 2, e.g., from 12K documents to 6K, we still achieve an average accuracy of 68\%.
Thus, the refined score attack shows a certain degree of robustness against decreased similarity. However, if we further reduce the size of the dataset, the accuracy is also further reduced until we have an ineffective attack.

\subsection{Attack analysis}
\label{subsec:attack_analysis}
\paragraph{Role of the amount of information}
The refined score attack is sensitive to the amount of information given to the attacker. The more information the adversary has, the higher the attack accuracy is. This holds true for each piece of information owned by the attacker: document set, observed query set, and known queries.
This was not the case in the previous attacks, especially for IKK and GCPR, which had identical results with and without known queries. IKK presented an accuracy of 80\% regardless of the percentage of known queries (from 0 to 25\% in their article).
CGPR only presented results without known queries for their count attack even if it could use them.

\paragraph{Technical comparison with related attacks}
Technically, all query-recovery attacks solve a matching problem between trapdoors and keywords based on specific background information available to the attacker.
IKK assumes partial knowledge of the indexed documents together with known trapdoor-keyword mappings. IKK describes the matching problem as an optimization problem that minimizes the distance between the trapdoor co-occurrence matrix and the keyword co-occurrence matrix.
CGPR makes similar assumptions while, in practice, it does not require known trapdoor-keyword mappings. CGPR iteratively filters keyword-trapdoor candidates for which the differences between the occurrences (computed from attacker documents and from observed queries) do not match.
Blackstone et al.~\cite{blackstonerevisiting} and Oya and Kerschbaum~\cite{oya2021} both propose attacks using query volume information only. \cite{blackstonerevisiting}~assumes an attacker can identify known documents in the index and thus still requires partial knowledge of indexed documents. They represent two bipartite graphs: one connecting indexed documents to the queries and one connecting known documents to keywords; then, they match query nodes with keyword nodes by iteratively refining the candidates using multiple filtering steps. Oya and Kerschbaum~\cite{oya2021} formulate an optimization problem based on maximum likelihood estimators, which assumes a distributional knowledge of the indexed documents plus knowledge about query frequency.

Instead of partial index information, we focus on a few attacker-known keyword-trapdoor pairs, which we use to score every keyword-trapdoor candidate. We then iteratively add pairs with the highest scores to the attacker-known pairs to improve our knowledge and refine further predictions. This avoids complex optimization problems and the requirement of knowledge about indexed documents. The scoring and its corresponding iterative refinement are the core novelties of our attack.

As highlighted in~\cite{blackstonerevisiting}, all prior attacks require exact knowledge of the queryable vocabulary (i.e., the client's keyword universe). Our attacker does not require such knowledge and builds her own vocabulary. Considering the following setup: $|\mathcal{D}_{\text{sim}}|=12K, |\mathcal{D}_{\text{real}}|=18K, m_{\text{sim}}=m_{\text{real}}=1K, |\mathcal{Q}|=300, |\text{Known}\mathcal{Q}|=15$, with an exact knowledge of the queryable vocabulary, we obtain an average accuracy of 92\%. On the other hand, when the attacker builds her own knowledge, we obtain an average accuracy of 87\%. This accuracy decrease is a direct consequence of the accuracy upper bound presented in Equation~\eqref{eq:acc_upper_bound} of Subsection~\ref{subsec:def_sim}.

In Appendix \ref{app:substitution}, we detail the relation between substitution cipher cryptanalysis and SSE attacks (especially the refined score attack).

\paragraph{Improving the attack using clustering}
Our novel scoring approach offers further possibilities for improvement. In Appendix \ref{app:clustering} we discuss clustering to improve attack results further.
In our attacks, when a prediction is uncertain, we sometimes have a group of candidates with higher scores than the rest of the candidates instead of only one candidate with a particularly high score.
In such cases, returning a list of potential keywords seems natural instead of forcing the algorithm to choose only one keyword.
Note that it would not affect the overall interpretability of the results as the scores are augmented.
In Appendix~\ref{app:clustering}, we show that clustering can further increase the accuracy of the refined scoring attack.

\section{Attack mitigation}
\subsection{Existing countermeasures}
To mitigate leakage-abuse attacks, several countermeasures have been proposed in \cite{IKK, cash, chen2018dprivacy, xu2019hardening}. We divide these countermeasures into two categories: padding (formalized by CGPR) and obfuscation. IKK proposed a first countermeasure that could be assimilated into padding. Padding consists in adding fake entries, i.e., fake keyword-document pairs. The user can easily filter these false-positive results when they receive the database response. With padding, there is no entry removal because it could impact the search results (i.e., no false negative results). To harden this countermeasure, Xu et al. proposed in \cite{xu2019hardening} a method to produce fake entries that cannot be distinguished from the real entries by an attacker. In \cite{chen2018dprivacy}, Chen et al. presented a new kind of countermeasure: obfuscation. First, it uses code erasure to divide the documents into shards. Thanks to code erasure, false negative results are allowed because the user does not need every shard to reconstruct the document. After having computed the shards, the algorithm adds and removes shards from the results. The removal rate is chosen so the reconstruction rate for matching documents is close to 100\%. Thus, false-negative shards do not result in false-negative documents.

Chen et al. also presented an improved attack scenario where the attacker knows which shards belong together. In this case, the countermeasure corresponds to padding because the attacker knows that all the reconstructed documents are either matching document or a false-positive result. Moreover, he knows that the proportion of matching documents which is not reconstructed is negligible. Therefore, if the attacker only keeps the reconstructed files, he would have all the matching documents plus some false-positive results.

These countermeasures have been proposed to mitigate known-data attacks, but they are also suitable for similar-data attacks since they alter the co-occurrence matrix $C_{td}$ inferred from the queries. Therefore, padding and obfuscation should be also effective in mitigating our attack.

\subsection{Experimental results}
To test the possibility of mitigating our attack, we implemented the padding presented in CGPR and the obfuscation presented in \cite{chen2018dprivacy}. For padding, we use the countermeasure proposed by CGPR, which is well established but the hardening proposed by Xu et al. \cite{xu2019hardening} would not provide highly different results since we do not try to filter fake entries.
Figure \ref{fig:countermeasure} describes the average accuracy of the refined score attack over 50 simulations for several vocabulary sizes. For the padding, we used a padding size $n_{\text{pad}}=500$. For the obfuscation, we used the parameters used by Chen against the "improved" attack: $p=0.88703$ the rate of false-positive shards, $q=0.04416$ the rate of false-negative shards.

\begin{figure}
	\includegraphics[width=\linewidth]{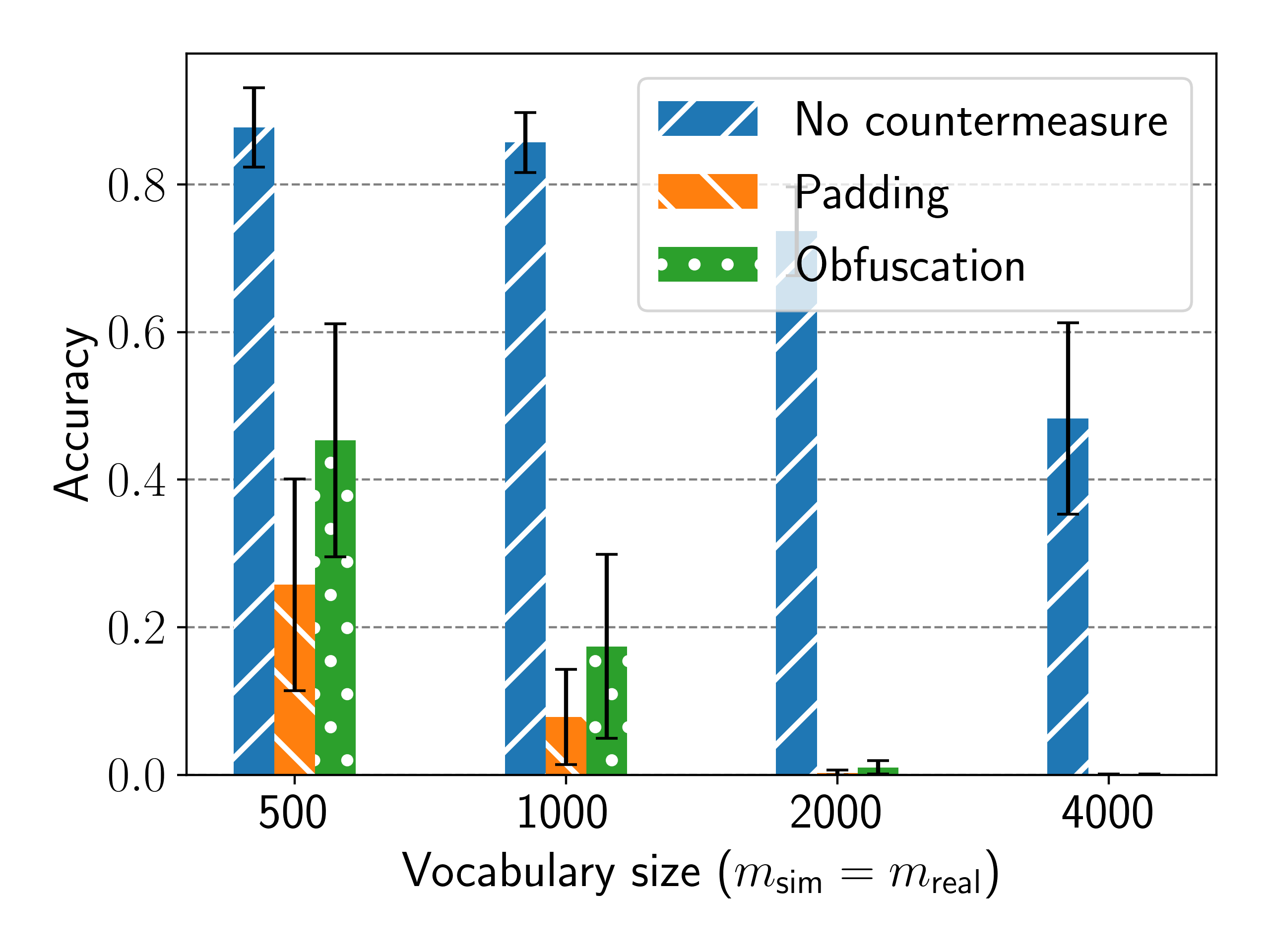}
	\caption{Comparison of the accuracy for countermeasures. Fixed parameters: $|\mathcal{D}_{\text{sim}}|=12K, |\mathcal{D}_{\text{real}}|=18K, |\mathcal{Q}|=0.15\cdot m_{\text{real}}, |\text{Known}\mathcal{Q}|=15$. Padding: $n_{\text{pad}}=500$. Obfuscation: $p=0.88703, q=0.04416$}
	\label{fig:countermeasure}
\end{figure}

Figure \ref{fig:countermeasure} clearly shows good mitigation from both countermeasures. For small vocabularies, the accuracy can still be considered as too high. However, as the vocabulary size grows, the accuracy becomes small and negligible for big vocabularies. This figure should not be used to compare the efficiency of the countermeasures. Padding performs better than obfuscation because the padding size we chose is high. For example, when $|\mathcal{K}_{\text{real}}|=1K$, the number of entries is increased by 32\% because of padding. When $|\mathcal{K}_{\text{real}}|=4K$, the number of entries is increased by 166\% because of padding. These fake entries create several types of overheads, including storage, communication and computation. Chen et al. chose $p=0.88703$ and $q=0.04416$ to minimize the overheads then, it is likely that obfuscation can achieve results equivalent or better with bigger overheads. We leave the comparison of these countermeasures and their overheads for future work. Our experiments highlight the importance of hiding the document access pattern to mitigate the refined score attack.

Our attack provides a matching score which can help to identify good predictions.  When $|\mathcal{K}_{\text{real}}|=500$, the average accuracy for padding is 35\% and for obfuscation 47\%, the refined score attack can successfully identify a non-negligible part of the correct predictions thanks to the matching score.
It is needed to define a maximum query recovery rate, so the countermeasure parameters are chosen so that there is no attack with an accuracy higher than this threshold. An analogy with encryption security is possible: the attack accuracy is the adversary advantage and the maximum query recovery is the threshold under which the advantage is considered negligible. We leave this direction for future work.

\section{Additional results}
\subsection{Generalization}
\label{subsec:generalization}
In \cite{bost2017}, Bost and Fouque presented the word-word co-occurrence as an order 2 of co-occurrence, occurrence being the order 1 of co-occurrence. Thus, we generalize our attack and build n-dimensional co-occurrence tensors to work on co-occurrences of order $n$. This generalization help to recover the queries since it increases exponentially the number of co-occurrence we can rely on.
For example, let us consider order 3: a word-word-word co-occurrence. We build 3-dimensional co-occurrence tensors and $C_{kw}[i,j,k]$ (resp. $C_{td}[i,j,k]$) is the number of documents where the keywords (resp. trapdoors) $i$,$j$ and $k$ appear together.

Our attack remains identical and just the matrix construction differs. In the refined score attack, if the order $n>2$, each keyword and trapdoor is represented by a $(n-1)$-dimensional tensors and the matching score will be computed via a matrix norm. The main issue of this generalization is the space complexity $\mathcal{O}((m_{\text{sim}})^n)$ due to tensor sizes. For $m_{\text{sim}}=1K$, with order 2, the similar co-occurrence matrix has 1 million cells and with order 3, the similar co-occurrence tensor has 1 billion cells. The first reason which could justify not to use an order greater than 2 is the technical limitations.

\begin{table}
	\small
	\renewcommand{\arraystretch}{1.3}
	\caption{Accuracy statistics on Enron over 50 simulations of orders 2 and 3. $|\mathcal{D}_{\text{sim}}|=12K,|\mathcal{D}_{\text{real}}|=18K,m_{\text{real}}=m_{\text{sim}}=300, |\mathcal{Q}|=75, |\text{Known}\mathcal{Q}|=10$}
	\centering
	\begin{tabular}{||c|c|c||}
		\hline
		Accuracy statistics & $\mu$  & $\sigma$ \\ \hline
		Order-2 attack      & 0.92   & 0.0351   \\ \hline
		Order-3 attack      & 0.77
		                    & 0.0659
		\\ \hline
	\end{tabular}
	\label{tab:generalization}
\end{table}

We have done simulations to compare order 2 and order 3. For each order, we run 50 simulations with Enron dataset, $m_{\text{real}}=m_{\text{sim}}=300, |\mathcal{Q}|=75, |\text{Known}\mathcal{Q}|=15$. As shown in Table \ref{tab:generalization}, for order 2, we obtained an average accuracy of 92\% and for order 3, 77\%. Then, in our case, increasing the order decreased the accuracy. It highlights the trade-off between the number of co-occurrence estimators and the noise of these co-occurrences. To make a decision, we need a maximum of co-occurrence estimators, but if they are too noisy, they will be misleading and the decision may be wrong. Here, we only have 30 thousand emails to compute 1 billion co-occurrences which are not enough to limit the noise of the co-occurrence tensor. However, increasing the co-occurrence order may be a viable option for attacks on larger datasets.

Note that the real co-occurrence matrix $C_{td}$ is always built using the index matrix ($ID[i,j]=1$ if document $i$ contains the underlying keyword of query $j$), whatever the order is. Thus, we expect altering the index matrix as proposed by IKK to be an effective countermeasure even against generalized attacks.

\subsection{About the observed query distribution}
\begin{figure}
	\includegraphics[width=\linewidth]{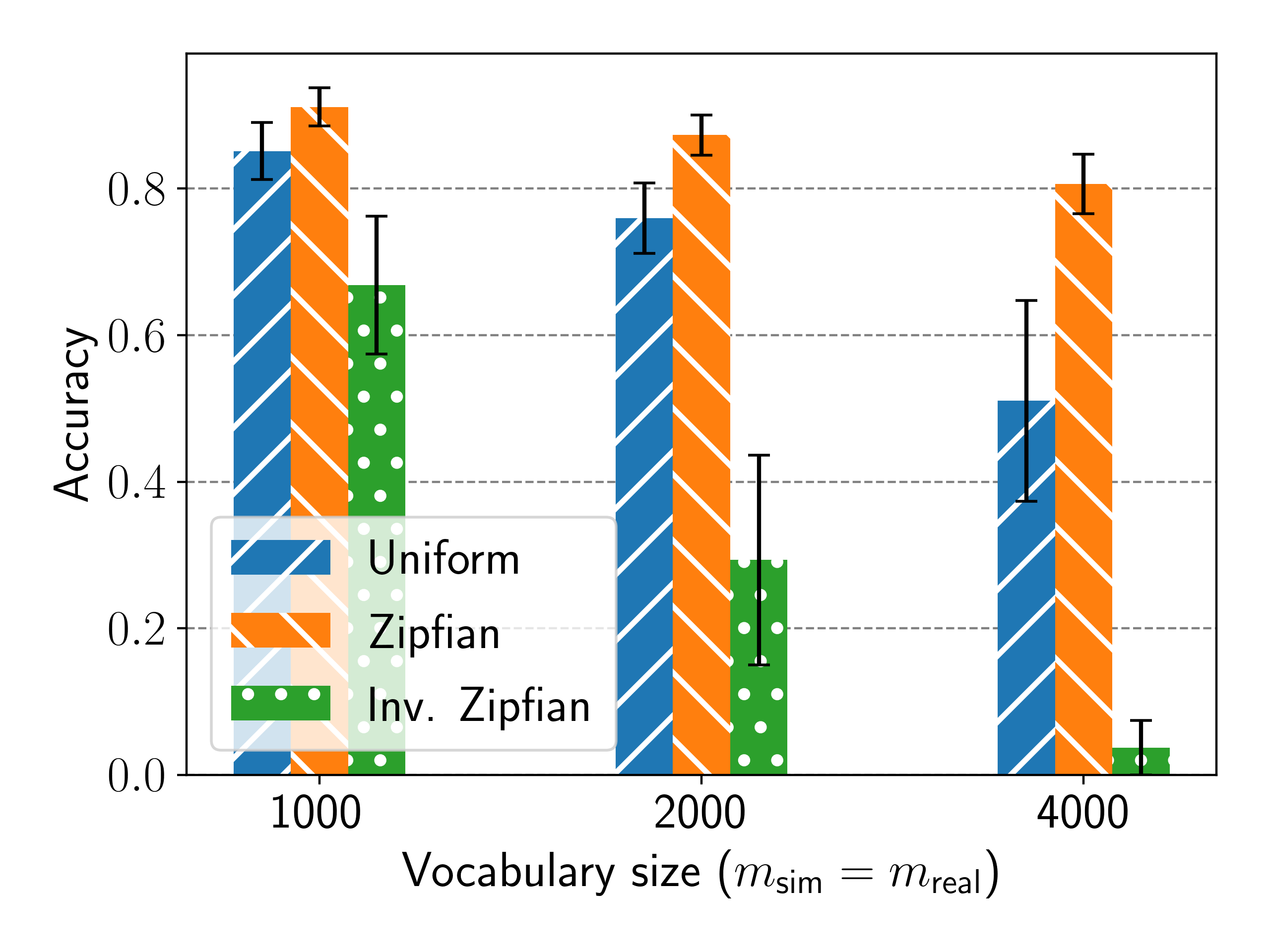}
	\caption{Comparison of the accuracy of the refined score attack for different query distributions. Fixed parameters: $|\mathcal{D}_{\text{sim}}|=12K, |\mathcal{D}_{\text{real}}|=18K, |\mathcal{Q}|=0.15\cdot m_{\text{real}}, |\text{Known}\mathcal{Q}|=15$.}
	\label{fig:query_distrib}
\end{figure}

In \cite{kornaropoulos2019state}, Kornaropoulos et al. discuss the default choice of the uniform distribution for range queries.
While they focus on SSE schemes allowing range queries, the same statement can be done for single-keyword search schemes. In \cite{blackstonerevisiting}, Blackstone et al. criticized the role of the query distribution. They show that the attack performance is highly impacted whether the attack is executed on the most frequent keyword or not.

Figure \ref{fig:query_distrib} compares the accuracy of the refined score attack over three different query distributions: Uniform, Zipfian and inverted Zipfian.
These probability distributions are defined in Appendix \ref{app:prob}.
The uniform distribution is the standard setup used in the rest of our experiments. With Zipfian distribution, which gives more weight to the highest rank elements, we mostly obtain queries for which the underlying keyword is one of the most frequent. With inverted Zipfian distribution, which gives more weight to the lowest rank elements, we mostly obtain queries for which the underlying keyword is one of the least frequent.

Figure \ref{fig:query_distrib} shows that inverted Zipfian decreases the refined score attack accuracy.
The attack becomes ineffective when the vocabulary is bigger (i.e., a vocabulary size of 4000).
Despite the inverted Zipfian distribution, the refined score attack still achieves 67\% of accuracy when the vocabulary size is 1K.
On the other hand, with the Zipfian distribution, the refined score attack reaches 81\%, when the vocabulary size is 4000, and up to 91\% when the vocabulary size is 1K.

The refined score attack could be much more devastating if the uniform assumption turns out to be false. By default, the literature uses the uniform distribution for the queries.
This assumption could be dangerous because, if the real query distribution is more advantageous than the uniform distribution (e.g. the Zipfian distribution), the SSE schemes are way more exposed than what is usually admitted.
The gap between Uniform and Zipfian distributions in Figure \ref{fig:query_distrib} for a vocabulary size of 4000 is particularly alarming since it nearly doubles the attack accuracy considering Zipfian distribution instead of Uniform distribution.

\subsection{About the known query distribution}
The query distribution explains only a part of the result variance. The distribution of known queries also impacts the results. It means that some known queries provide more information than others. To identify the impact of this distribution, we simulated 50 times two refined score attacks and studied their respective accuracy variance. The first attack is the basic setup used in our article: 5 known queries picked uniformly among the queries. The second attack simulation picks 5 known queries uniformly from the 25\% of queries with the largest result sets, i.e., the most frequent underlying keyword. We report the results in the Table \ref{tab:known_query_distribution}. The basic setup has a bigger variance and a lower mean. The second experiment presents steadier results which confirms that the distribution of known queries impacts the results.
Since an attacker has still chances to observe only queries with the most frequent underlying keywords given a uniform distribution, the maximum accuracy scores are equivalent for both distributions.

\begin{table}
	\renewcommand{\arraystretch}{1.3}
	\caption{Variance of the accuracy over 50 simulations of the refined score attack. $|\mathcal{D}_{\text{sim}}|=12K,|\mathcal{D}_{\text{real}}|=18K,m_{\text{real}}=m_{\text{sim}}=1K, |\mathcal{Q}|=150, |\text{Known}\mathcal{Q}|=5$.}
	\centering
	\begin{tabular}{||c|c|c|c|c|c|c||}
		\hline
		Acc. stats.            & $\mu$ & $\sigma$ & $q_{0.25}$ & $q_{0.75}$ & min  & max  \\ \hline
		Base setup             & 0.65  & 0.21     & 0.54       & 0.78       & 0.06 & 0.87 \\ \hline
		Top 25\% $\mathcal{Q}$ & 0.71  & 0.16     & 0.68       & 0.81       & 0.17 & 0.87 \\
		\hline
	\end{tabular}
	\label{tab:known_query_distribution}
\end{table}

The variance is lower when the adversary obtains more known queries because there are enough "good" known queries to start the refinement. Thus, only a part of these known queries are truly necessary. An attacker can use \cite{zhang2016all} active attack to obtain their known queries. Thus, an attacker can aim at a specific known query distribution in order to minimize the number of known queries needed by attacking specific keywords. Just few qualitative known queries are needed to start a successful refined score attack.

\section*{Conclusion}
We introduced a highly effective similar-data attack against SSE. The refined score attack achieves an accuracy (i.e., query-recovery rate) of 90\% while only using documents \emph{similar} to the encrypted documents. Previous attacks could only achieve equivalent results by assuming that attacker knows a significant part of the encrypted documents (from 20\% for Blackstone et al. \cite{blackstonerevisiting} to 70\% for Cash et al. \cite{cash}). Our attack provides devastating results while avoiding the strong assumption that the attacker knows a part of the encrypted documents.
Unlike the existing attacks we assume few known queries (around 10 known queries in our experiments over different datasets of varying size) rather than knowing the plaintext of a substantial part of the encrypted documents.
We argue that it is more realistic to obtain few known queries (e.g. using active attacks) than to obtain a part of the documents indexed. Thus, the refined score attack is more easily performed while previous attacks had a limited number of realistic use cases.
One conclusion of our experiments is that the more knowledge the adversary has, the better our refined score attack performs. Despite the simplicity of this statement, it was not observed in previous attacks. This sensitivity to the information amount highlights an optimized adversary knowledge use as opposed to a relative underutilization of this knowledge by some existing attacks.
We also showed that the existing countermeasures (padding and obfuscation) can effectively mitigate the refined score attack.
Finally, we highlighted that the distribution of observed and known queries impacts the accuracy of our attack. It implies that, if the real query distribution is different from the uniform distribution commonly used in the literature, the refined score attack can be even more devastating.
Considering the results presented in this article, SSE should no longer be used without countermeasures.


\bibliographystyle{plain}
\bibliography{ref}

\newtheorem{dissim}{Definition}
\newtheorem{linkcrit}[dissim]{Definition}
\newtheorem{hierar}[dissim]{Definition}
\newtheorem{levels}[dissim]{Property}
\newtheorem{notations}[dissim]{Notation}

\newtheorem{ordered}[dissim]{Theorem}
\newtheorem{distances}[dissim]{Theorem}
\newtheorem{bestcandidate-cluster}[dissim]{Definition}
\newtheorem{lem}[dissim]{Lemma}

\appendix

\section{Extended discussion of related works}
\label{app:related_work}
We have discussed the most relevant related work in Section~\ref{sec:introduction} and compared our refined score attack with the related attacks in Section~\ref{subsec:attack_analysis}. In this Appendix, we would like to discuss other lines of related research.

\subsection{General overview}
\paragraph{Passive attacks against L1 schemes}
Several attacks have already been proposed to recover the queries of L1 schemes. All of them had one common assumption to achieve good results: the adversary knows at least a part of the documents indexed. In \cite{IKK} (IKK), Islam et al. presented the first attack using the co-occurrence of the search tokens. In \cite{cash} (CGPR), Cash et al. presented a simpler but very effective known-data attack using slightly more knowledge than IKK. Two significant advantages of CGPR over IKK were its execution time and its effectiveness on large keyword sets. However, \cite{cash} highlighted that IKK and CGPR attacks could not provide compelling results as similar-data attacks, i.e., the attacker needs to know a part of the indexed data to recover some queries. In \cite{pouliot2016shadow}, Pouliot and Wright introduced a graph-matching attack. Pouliot et al.~\cite{pouliot2016shadow} proposed a graph-matching attack that can work as a known-data attack and a similar-data attack. As a similar-data attack, it rarely recovers more than 50\% of the queries and only under rather advantageous conditions (e.g., small vocabulary). Moreover, the execution time increases tremendously for bigger keyword sets.
In \cite{ning2018passive}, Ning et al. presented a new known-data attack that represents every keyword (resp. search token) with a binary sequence: the $i$-th bit is 1 if the $i$-th document contains the keyword (resp. matches the query), 0 otherwise.
The sequences are compared to find the underlying keywords of the queries. This attack performs better than CGPR. However, it is a known-data-only attack since it cannot use distributional knowledge to construct these binary sequences. In \cite{blackstonerevisiting}, Blackstone et al. revisited the underlying concepts of the known-data attacks and presented a new known-data attack. This attack can be done on co-occurrence-hiding schemes and achieves good results even when the known data is small. Despite these results, this attack cannot be executed using similar data (i.e., it is a known-data-only attack) because the algorithm assumes that the documents known by the attacker are indexed. In \cite{wang2019practical}, Wang et al. introduced a volume-based attack working on co-occurrence-hiding schemes. The latter two attacks~\cite{blackstonerevisiting,wang2019practical} even work against schemes with protection beyond L1 security.

All these passive attacks tried to avoid known queries as part of the adversary knowledge: \cite{wang2019practical, blackstonerevisiting} did not assume known queries.
In contrast, \cite{IKK} and \cite{cash} showed that query knowledge does not improve the results significantly.
However, they all made another strong assumption: the adversary knows a significant part of the encrypted documents.
Our work follows an orthogonal approach and study similar-data attacks that avoid partial knowledge of the encrypted documents but exploit a small subset of queried keywords. We argue that obtaining a few known queries (using active attacks) is easier than obtaining a part of the encrypted documents. Thus, the refined score attack can be considered much more feasible than existing leakage abuse attacks.

\paragraph{Other attacks}
Other attacks exist against L1 schemes but differ from what is intended by the passive attacks presented above. In \cite{zhang2016all}, Zhang et al. presented an active attack that recovers specific keywords based on maliciously crafted files/documents injected by the attacker.
In \cite{liu2014search}, Liu et al. recover the queries exploiting the query frequencies. This last direction has not been treated much more in the literature since no query dataset is available. In \cite{oya2021}, Oya and Kerschbaum introduced an attack assuming co-occurrence AND query frequency. This new attacker model is more or less a mixture of the attacker models from Liu et al.~\cite{liu2014search} and Blackstone et al.~\cite{blackstonerevisiting}. While strengthening the attacker assumptions with the query frequency, it is important to highlight that this attack can work on schemes with no or partial access pattern leakage \cite{demertzis2020seal, patel2019mitigating, patel2020lower}.

Several articles have also presented attacks on schemes other than L1 schemes. In \cite{giraud2017practical, grubbs2017leakage, cash}, the attacks are focused on less secure schemes, i.e., schemes with L2, L3, or L4 leakage profiles. In \cite{grubbs2018pump, kellaris2016generic, lacharite2018improved}, attacks on schemes supporting range queries were proposed. These SSE schemes are opposed to the schemes for single-keyword searches that are the subject of our attack.

\subsection{Relation between substitution cipher cryptanalysis and SSE attacks}
\label{app:substitution}
There exists a particular link between SSE passive query-recovery attacks and the cryptanalysis of substitution ciphers. In \cite{cash}, Cash et al. define L4 leakage profile as \textit{full plaintext under deterministic word-substitution cipher}.
We argue that the analogy to substitution ciphers still holds for most secure schemes, including L1 schemes. In simple substitution ciphers, each plain letter is replaced by one other letter, the key is a dictionary, for example $\{"a": "x", "b": "j", \dots\}$.  In SSE with single-keyword search, we can construct a similar mapping such that $\{trapdoor_i: keyword_j, \dots\}$.
The main difference is the larger alphabet size for SSE.

To perform a ciphertext-only attack on substitution ciphers, a frequency analysis is necessary.
A very common way to proceed is to compute $n$-grams from the ciphertext and compare them with the reference $n$-grams occurrences computed from a large publicly available corpus, e.g.~the most frequent English bigram is "th".
Several methods, automated or not have been proposed: \cite{gaines1956cryptanalysis, forsyth1993automated, spillman1993use, dhavare2013efficient, clark1997parallel, uddin2006cryptanalysis}.
Especially in \cite{gaines1956cryptanalysis} and \cite{dhavare2013efficient}, simple attacks based on "digrams" are presented. The digrams are letter-letter co-occurrence matrices. We can see an equivalence between the letter-letter co-occurrences used for substitution cipher attacks and the word-word co-occurrences used by IKK, CGPR and our attack. However, we note that, with L1 schemes, the matrix is symmetric while the digrams are not. Moreover, Forsyth and Safavi-Naini in \cite{forsyth1993automated} tackle the frequency analysis problem (for substitution cipher attacks) by using simulated annealing as IKK for SSE. In \cite{dhavare2013efficient}, Dhavare et al. presented a hill climbing solution, an optimization algorithm similar to simulated annealing. Simulated annealing is a powerful approach against substitution ciphers and less against SSE due to the large alphabet size.

We argue that a similar-data attack on SSE is analogous to a ciphertext-only attack on substitution ciphers since the publicly available corpus used for substitution ciphers attack is analogous to the similar document set used in similar-data attacks. In our case, we present a similar-data attack with known queries corresponding to a chosen-plaintext attack. One could say that substitution ciphers cryptanalysis uses $n$-grams and not only bigrams. Thus, we present a generalized attack in Subsection \ref{subsec:generalization} that uses co-occurrence of order n (i.e., the number of documents containing $n$ specific keywords).

\paragraph{Refined score attack}
We compare the refined score attack to the digram method presented in \cite{gaines1956cryptanalysis}. This method has a preliminary step of vowel identification based on the letter occurrences. This preliminary step can correspond to the prior active attack performed to obtain known queries for the refined score attack. Then, the cryptanalyst identifies iteratively new letters using the digrams of these vowels (equivalent to co-occurrence matrices). When the cryptanalyst guesses new letters, she can use them to identify the remaining unknown letters. In our refined score attack, at the end of each iteration, we learn few queries and will use these newly known queries to recover the remaining unknown queries. There is a strong similarity between the notions of known letters in substitution cipher cryptanalysis and of known query in SSE attack and the way they are used to iteratively discover new letters (resp. queries).

\section{Estimation of the number of indexed documents}
\label{estimate_n}
Both IKK and GCPR attacks use known queries but conclude that the results are equivalent with or without them. We assume that known queries convey significant information and should be fully used to obtain an effective attack as shown in Section \ref{sec:improved}.
Another example of this knowledge underutilization is the number of documents indexed $n_{\text{real}}$, which is considered as known by IKK and CGPR attacks. However, if the attacker is a passive traffic observer he would not have this information. IKK and CGPR only considered the honest-but-curious server. Storing the index and the documents on two separate servers is a simple way to degrade the information leakage to that of a passive traffic observer.

This number is mandatory to transform the count matrix into a frequency matrix. We note $\mathcal{D}_{\text{sim}}(kw)$, the documents from $\mathcal{D}_{\text{sim}}$ that contains the keyword $kw$. We also highlight $|R_q| = |\mathcal{D}_{\text{real}}(kw)|$ if $kw$ is the underlying keyword of query $q$.

\begin{equation}
	\label{eq:n_real}
	\hat{n}_{\text{real}}=	\frac{1}{k} \cdot  \sum_{kw,td \in \text{Known}\mathcal{Q}} {\frac{|R_q|}{|\mathcal{D}_\text{sim}(kw)|}}  \cdot   n_{\text{sim}}
\end{equation}

Equation \eqref{eq:n_real} shows how $\hat{n}_{\text{real}}$ the estimation of the number of indexed is computed. The first part of the equation (i.e., $\frac{1}{k} \cdot  \sum_{kw,td \in \text{Known}\mathcal{Q}} {\frac{|R_q|}{|\mathcal{D}_\text{sim}(kw)|}}$) is the average ratio between the number of encrypted documents containing one keyword and the number of similar documents containing the exact same keyword. Then, this ratio (which is a sort of scale factor) is multiplied by the number of similar documents to obtain $\hat{n}_{\text{real}}$.

Thanks to this estimation, the minimum adversary knowledge needed by IKK and CGPR attacks does not include the number of indexed documents contrary to what was implicitly assumed. If the result length is hidden, the co-occurrence between the known queries can be used to estimate $\hat{n}_{\text{real}}$.

\section{Improvement strategy: Clustering}
\label{app:clustering}

The matching score provides a very interesting basis to interpret and analyse the results. By default, we always pick $td_\text{pred}=\arg \max_i \text{Score}(td_i, kw)$ and the difference between the score of $td_\text{pred}$ and the score of the second best prediction is considered as the certainty of the predictions. However, we observed that, sometimes, we have several potential candidates instead of one:
\begin{itemize}
	\item Classical score distribution: [$\dots$ 6, 6.2, 6.3, 9], one clear candidate
	\item Atypical score distribution: [$\dots$ 6, 6.2, 6.3, 7.9, 8, 8.2], one cluster of candidates
\end{itemize}

We argue that it would be very interesting to return clusters when the choice is uncertain. To process appropriately these score distributions, we use hierarchical clustering (\cite{ward1963, intro_info_retrieval}) to identify the best-candidate cluster. With clustering, the prediction will be a cluster (either with one single candidate or with several candidates) and the certainty of the prediction will be the distance between the best-candidate cluster and the rest of the scores. In the main body of this paper, a certain prediction was a prediction for which the certainty is high. In this case, a certain prediction is a single-point cluster for which the certainty is high.

Hierarchical clustering is an iterative method used to obtain $n-1$ clusters from $n$ clusters. We specifically use single-linkage clustering, which considers the minimum distance between two clusters as the dissimilarity (described in Figure \ref{alg:single-linkage-clust} of Appendix \ref{app:def}).
Usually it is needed to define a number of clusters or a "cutting height" to know when to stop the iterations. To avoid this problem, we define a maximum size $\text{MaxSize} < m_{\text{sim}}$ for our best-candidate cluster. This parameter can be easily set by an attacker without knowledge about hierarchical clustering.

For each query, we have $m_{\text{sim}}$ candidates because we compute the matching score of the trapdoor with all keywords from $\mathcal{K}_{\text{sim}}$.
We denote $\Gamma_i$, the clusters after the $i$-th iteration and $\Gamma_0$ the initial state which is a partition of one-point clusters. The clustering is done over $\mathcal{S} = \{s_1,\dots,s_{m_{\text{sim}}}\}$, the matching scores of one given trapdoor with all the candidates, sorted in descending order. We define the best-candidate cluster $S_{max}$ as follows:

\begin{align*}
	 & \exists i_{max} \in \{0,\dots, m_{\text{sim}} - 2\} \text{ such that}                           \\
	 & \quad i_{max} = max\{i: \exists S \in \Gamma_i, s_1 \in S \wedge |S| \le \text{MaxSize} \}      \\
	 & \text{So, the best-candidate cluster } S_{max} \in \Gamma_{i_{max}} \text{and } s_1 \in S_{max} \\
	 & \quad \text{i.e., } S_{max}\text{ is the cluster containing the highest score.}
\end{align*}

Since we use single-linkage clustering in a 1-dimensional space (i.e., the scores), we obtain the Equation \eqref{eq:improved_alg} (proof in Appendix \ref{app:main}).
This does not hold with more than one dimension or with a different linkage criterion. This equation is interesting because it implies that there is a $\mathcal{O}(\text{MaxSize})$ algorithm to find $S_{max}$. In comparison, the naive single-linkage clustering algorithm is $\mathcal{O}(n^3), n\gg\text{MaxSize}$. This complexity reduction is important because a clustering is performed over the $m_{\text{sim}}$ candidates of each trapdoor.

\begin{equation}
	\label{eq:improved_alg}
	\begin{aligned}
		\exists & i \le \text{MaxSize}, S_{max} = \{s_1, \dots ,s_i\} \text{ and }  \\
		        & \forall j \le \text{MaxSize} + 1, s_i - s_{i-1} \le s_j - s_{j-1}
	\end{aligned}
\end{equation}

To obtain $S_{max}$, we use Figure \ref{alg:clust}, which takes as input the score set $\mathcal{S}$ and the parameter $\text{MaxSize}$.
It outputs the best-candidate cluster and the distance between this cluster and the closest cluster (i.e., the certainty of the prediction).
To find the best-candidate cluster, the algorithm needs to find the maximum leap between two consecutive scores among the (MaxSize + 1) maximum scores from the score set $\mathcal{S}$. From Equation \eqref{eq:improved_alg}, we know that all the scores, which are before this maximum leap compose $S_{max}$.

\begin{figure}[h]
	\begin{algorithmic}
		\REQUIRE $\mathcal{S}, \text{MaxSize}$
		\STATE $\text{MaxDist} \leftarrow 0$
		\STATE $\text{MaxInd} \leftarrow 0$
		\STATE $\mathcal{S} \leftarrow sort(\mathcal{S}, desc)$

		\FORALL{$i \in {1 \dots \text{MaxSize}}$}
		\STATE $\text{CurrDist} = S[i] - S[i+1]$
		\IF{$\text{MaxDist} < \text{CurrDist}$}
		\STATE $\text{MaxDist} \leftarrow \text{CurrDist}$
		\STATE $\text{MaxInd} \leftarrow i$
		\ENDIF
		\ENDFOR

		\STATE $S_{max} = S[:\text{MaxInd}]$
		\COMMENT {MaxInd first elements of S}
		\RETURN $S_{max}, \text{MaxDist}$
	\end{algorithmic}
	\caption{Best-candidate clustering algorithm}
	\label{alg:clust}
\end{figure}

This clustering can be used to improve either the base attack or the refined attack. To improve the base attack, we just need to call the clustering algorithm in the prediction loop: instead of appending the candidate with the highest score, the algorithm appends the best-candidate cluster to the prediction list. To improve the refined attack, clustering will be used to identify the most certain predictions. The algorithm stops when there are less than RefSpeed single-point clusters found. We present comparative results in Figure \ref{fig:comp_res_clust}. The accuracy is strongly increased for the base score attack (about 15 percentage points). We still observe an improvement for the refined score attack (about 5 percentage points).

\begin{figure}
	\includegraphics[width=0.9\linewidth]{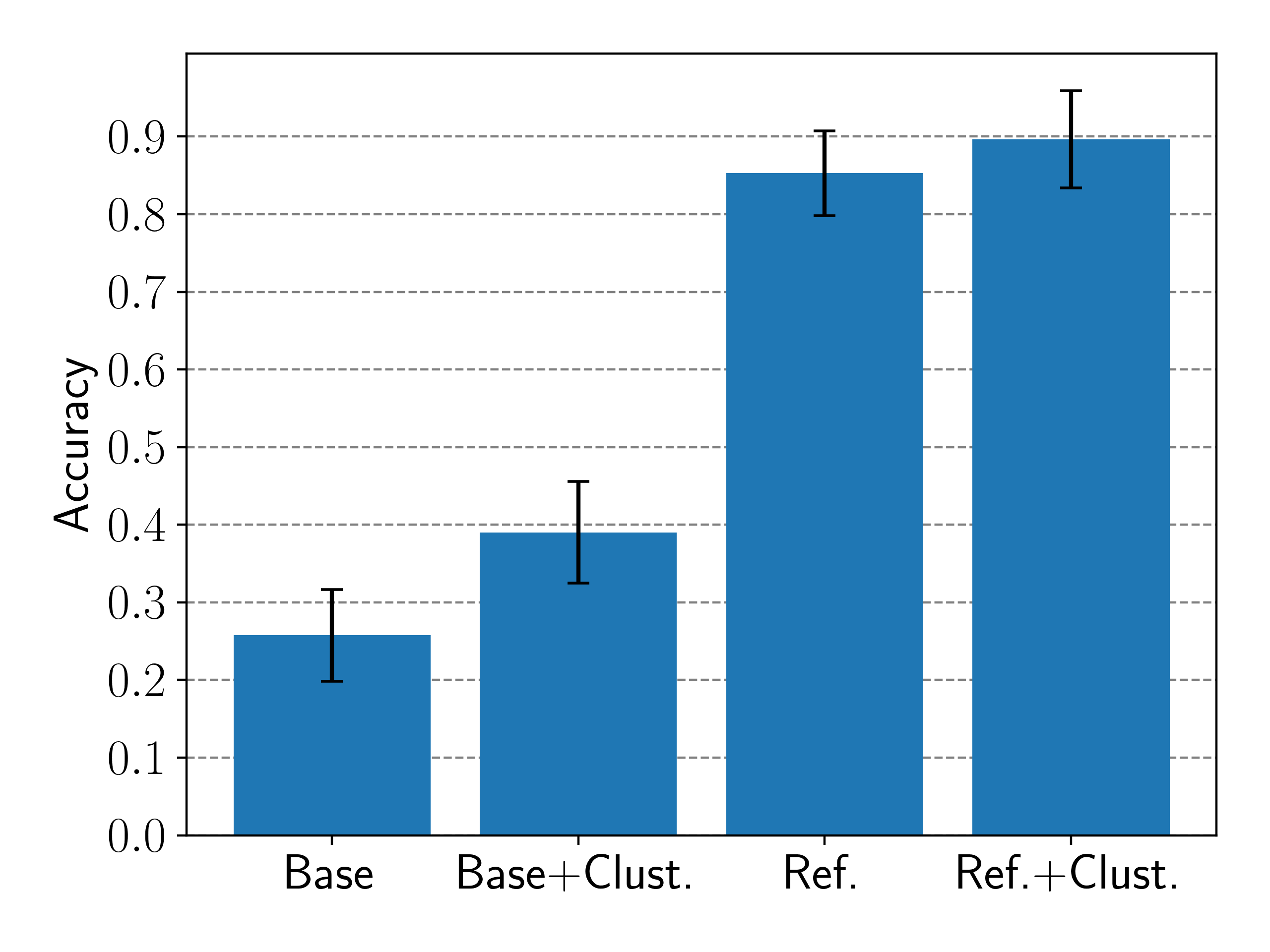}
	\caption{Comparison of the accuracy of the score attacks with and without clustering. Parameters: $|\mathcal{D}_{\text{sim}}|=12K, |\mathcal{D}_{\text{real}}|=18K, m_{\text{sim}}=1.2K, m_{\text{real}}=1K, |\mathcal{Q}|=150, |\text{Known}\mathcal{Q}|=10, \text{RefSpeed}=10, \text{MaxClustSize}=10$}
	\label{fig:comp_res_clust}
\end{figure}

In the particular case of clustering, a correct prediction is a prediction for which the cluster returned contains the correct keyword. Thus, comparing the accuracies with and without clustering is imperfect since methods with clustering has a slightly different definition of accuracy. Moreover, we highlight that, by construction, the methods improved  with clustering must perform at least as well as the standard methods.

\paragraph{Cluster size choice}
\begin{table}
	\small
	\renewcommand{\arraystretch}{1.3}
	\caption{Cluster size statistics (and their corresponding average accuracy) over 50 simulations of the refined score attack using the clustering improvement. $|\mathcal{D}_{\text{sim}}|=12K,|\mathcal{D}_{\text{real}}|=18K,m_{\text{real}}=m_{\text{sim}}=1K, |\mathcal{Q}|=150, |\text{Known}\mathcal{Q}|=15$}
	\centering
	\begin{tabular}{||c|c|c|c|c|c||c||}
		\hline
		Size stats. & $\mu$ & $q_{0.8}$ & $q_{0.85}$ & $q_{0.95}$ & $q_{0.99}$ & Acc.                \\ \hline
		MaxSize=1   & 1     & 1         & 1          & 1          & 1          & 0.873               \\ \hline
		MaxSize=5   & 1.26  & 1         & 1          & 3          & 5          & 0.902               \\ \hline
		MaxSize=10  & 1.36  & 1         & 2          & 3          & 7          & 0.906               \\ \hline
		MaxSize=20  & 1.41  & 1         & 2          & 4          & 8          & 0.907               \\ \hline
		MaxSize=50  & 1.45  & 1         & 2          & 4          & 9          & 0.907               \\ \hline
		\hline
		\multicolumn{7}{||c||}{Results below were obtained with $m_{\text{real}}=m_{\text{sim}}=2K$} \\ \hline
		MaxSize=5   & 1.35  & 1         & 2          & 3          & 5          & 0.658               \\ \hline
		MaxSize=10  & 1.53  & 2         & 2          & 5          & 8          & 0.667               \\ \hline
		MaxSize=20  & 1.63  & 2         & 2          & 5          & 11         & 0.670               \\ \hline
	\end{tabular}
	\label{tab:cluster_size}
\end{table}

Table \ref{tab:cluster_size} presents the size statistics of the clusters returned by the clustering + refined score attack algorithm with varying $\text{MaxClustSize}$. The table is separated into two parts: the upper part presents results when the vocabulary size is 1K and the lower part when the vocabulary size is 2K. First, we note that choosing MaxSize=1 is strictly equivalent to using the standard refined score attack. In the upper part, we read that $q_{0.8}=1$, which means that for at least 80\% of the queries, only one possible keyword is returned. When $\text{MaxClustSize}=10$, we also note that $q_{0.99}=7$, i.e., less that 1\% of the queries has a best-candidate cluster reaching the maximum size. These results tend to prove that the clustering does not improve artificially the results because the refined score algorithm returns cluster only for a small minority of results. Moreover, when $\text{MaxClustSize}=1$ (i.e., refined score attack without clustering), the accuracy is slightly decreased (3\%).

In the Figure \ref{fig:comp_res_clust}, we use $\text{MaxClustSize}=10$. In the upper part of Table \ref{tab:cluster_size}, we show that the accuracy is increased compared to the experiments using 1 or 5 as the maximum size. However, the accuracy is only very slightly (less than 0.1\%) increased when the maximum size is 20 or 50. This small accuracy difference could also be few big clusters (i.e., 20-point clusters) containing the correct keyword but the attacker has no way to identify this result as a correct prediction. We can also wonder how this attacker can exploit such clusters. Thus, the experimental accuracy might be increased but the practical accuracy would remain identical. On the other hand, choosing a maximum cluster size of 10 instead of 20 divides the complexity by two. To sum up, by choosing $\text{MaxClustSize}=10$, we sacrifice an uncertain 0.1\% accuracy gain for an algorithm execution time divided by two.

In our experiments, these clusters seem to contain words, which are semantically close. We observe clusters containing only figures or only days of the week. However, we cannot draw any strong semantic conclusion from these clusters since they are built from very small corpuses. Clusters with a real semantic signification are used in natural language processing, especially for translation but are obtained from corpuses composed of billions of documents. This claim seems coherent since the word-word co-occurrence matrix is the basis of word embeddings as GloVe \cite{pennington2014glove}.

\subsection{Definitions}
\label{app:def}

\begin{linkcrit}
	The \emph{linkage criterion} determines the distance between sets of vectors $A$ and $B$. We denote as $d$ the Euclidean distance. The most common linkage criteria are:
	\begin{itemize}
		\item $D_{min} = min\{d(a,b): a\in A, b\in B\}$
		\item $D_{max} = max\{d(a,b): a\in A, b\in B\}$
	\end{itemize}
\end{linkcrit}

\begin{hierar}
	Hierarchical clustering is an iterative method using a linkage criterion in order to obtain $n-1$ clusters from $n$ clusters. When the linkage criterion is $D_{min}$, it is called single-linkage clustering.
\end{hierar}

\begin{figure}[h!]
	\begin{algorithmic}
		\REQUIRE $\mathcal{S} = \{s_1\dots s_n\}$

		\STATE $\Gamma \leftarrow \text{EmptyList}(size=n+1)$
		\STATE \# The list containing the consecutive clustering.
		\STATE $\mathcal{L} \leftarrow \text{EmptyList}(size=n+1)$
		\STATE \# The list containing the consecutive levels of clustering

		\STATE $\Gamma[0] \leftarrow \{\{s_1\}\dots \{s_n\}\}$
		\STATE $\mathcal{L}[0] = 0$

		\FOR{$i=1$ \TO $n$}
		\STATE $S_1,S_2 = arg\,min \{D_{min}(S,S'): S,S' \in \Gamma[i-1], S \neq S'\}$
		\STATE $\Gamma_{temp} \leftarrow \Gamma[i-1]$
		\STATE remove $S_1, S_2$ from $\Gamma_{temp}$
		\STATE append $S_1 \cup S_2$ to $\Gamma_{temp}$
		\STATE $\Gamma[i] \leftarrow \Gamma_{temp}$
		\STATE \# $\Gamma[i] = (\Gamma[i] \backslash \{S_1,S_2\}) \cup \{S_1\cup S_2\}$
		\STATE $\mathcal{L}[i] = D(S_1,S_2)$
		\ENDFOR
	\end{algorithmic}
	\caption{Naive single-linkage clustering algorithm}
	\label{alg:single-linkage-clust}
\end{figure}

Note that the function $\mathcal{L}$, called the level, is defined in Figure \ref{alg:single-linkage-clust}. Each clustering $\Gamma_{i}$ is linked to its level $\mathcal{L}(i)$. The function $\mathcal{L}$ satisfies the following equation:
\begin{equation}
	\label{eq:prop_lvl}
	\forall i, \mathcal{L}(i+1) \ge \mathcal{L}(i)
\end{equation}

\begin{notations}
	\label{notations}
	Let $\mathcal{S}=\{s_1 \dots s_{n}\}$ be the set of one-dimensional vectors we want to cluster. We assume that $\mathcal{S}$ is sorted in descending order (so $s_j \geq s_{j+1}$). Note that if $s_j=s_{j+1}$, their index can be swapped. \\
	Let $\Gamma_i$ be the clustering of $\mathcal{S}$ obtained after $i$ iteration of the for-loop of Figure \ref{alg:single-linkage-clust}.
\end{notations}

\subsection{Auxiliary theorems}
\label{app:side_th}
\begin{ordered}{The clusters from $\Gamma_i$, as defined in Notation \ref{notations}, are ordered, i.e.,}
	\label{th:order}
	\begin{align*}
		\forall i \in & \{0 \dots n-1\}, \forall S\in \Gamma_i, \exists j \in \{1 \dots |\mathcal{S}|\} \text{ such that } \\
		              & S=\{s_j\dots s_{j+|S|-1}\}
	\end{align*}

\end{ordered}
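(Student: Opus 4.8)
The plan is to argue by induction on the number of completed merges $i$, exploiting that for one-dimensional points the single-linkage distance $D_{min}$ between two clusters is just the gap between their nearest endpoints, so the cheapest available merge always fuses two neighbouring blocks. For the base case $i=0$, Notation~\ref{notations} gives $\Gamma_0=\{\{s_1\},\dots,\{s_n\}\}$, and each singleton $\{s_j\}$ is trivially of the claimed form with $|S|=1$.

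For the inductive step I would assume that every cluster of $\Gamma_{i-1}$ is a contiguous block $\{s_j,\dots,s_{j+|S|-1}\}$. Because these blocks partition the non-increasing sequence $s_1\ge\cdots\ge s_n$, they line up as consecutive index-intervals $I_1,\dots,I_m$, ordered so that every score in $I_t$ is at least every score in $I_{t+1}$. The core claim is that the pair chosen by the $\arg\min$ in Figure~\ref{alg:single-linkage-clust} may be taken to be a pair of neighbours $I_t,I_{t+1}$. Indeed, writing $s_e$ for the last (smallest) score of $I_t$ and $s_{e+1}$ for the first (largest) score of $I_{t+1}$, the mutually closest endpoints give $D_{min}(I_t,I_{t+1})=s_e-s_{e+1}$. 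For any non-adjacent pair $I_t,I_{t'}$ with $t'>t+1$, the closest endpoints are $s_e$ and the first score $s_f$ of $I_{t'}$, where $f>e+1$; monotonicity forces $s_f\le s_{e+1}$, hence $D_{min}(I_t,I_{t'})=s_e-s_f\ge s_e-s_{e+1}=D_{min}(I_t,I_{t+1})$. Thus no non-adjacent pair is cheaper than an adjacent one, so the algorithm merges two adjacent blocks $\{s_a,\dots,s_e\}$ and $\{s_{e+1},\dots,s_d\}$ into the contiguous block $\{s_a,\dots,s_d\}$; since all remaining clusters are untouched, they stay contiguous by hypothesis, and every cluster of $\Gamma_i$ has the required form.

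The one delicate point — and the part I expect to need the most care — is the treatment of ties, where distinct indices may carry equal scores and the inequality above degenerates to an equality, so that the raw $\arg\min$ could formally point to a non-adjacent pair. Here I would observe that equality $D_{min}(I_t,I_{t'})=D_{min}(I_t,I_{t+1})$ forces $s_f=s_{e+1}$, so the adjacent pair $I_t,I_{t+1}$ attains the same minimal distance; an adjacent minimiser therefore always exists, and resolving the $\arg\min$ in its favour keeps the induction intact. This is exactly the freedom sanctioned by the remark in Notation~\ref{notations} that equal-valued indices may be swapped, which also guarantees that the relabelling needed to realise the adjacent choice never destroys the contiguity of the other blocks.
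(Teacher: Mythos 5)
Your proposal is correct and takes essentially the same route as the paper's proof: induction on the number of completed merges, with the key point that for contiguous blocks of a sorted one-dimensional sequence an adjacent pair always attains the minimal single-linkage distance, so each merge produces another contiguous block up to swapping equal-valued indices (the freedom granted in Notation~\ref{notations}). The paper packages this as a proof by contradiction -- an element lying strictly between the merged clusters would contradict the minimality of the chosen pair via the monotone level function of Equation~\eqref{eq:prop_lvl}, with ties dispatched by the same index-swapping convention you invoke -- but that is the same argument as your direct endpoint inequality $D_{min}(I_t,I_{t'})=s_e-s_f\ge s_e-s_{e+1}=D_{min}(I_t,I_{t+1})$, presented contrapositively.
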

\begin{proof}
	We proceed by induction.

	For $i=0$, it is true because $\Gamma_0$ only contains one-point clusters.

	Let us suppose the statement is true for $i=p$ with $p<n-1$, and we will prove that:
	\begin{equation}
		\forall S \in \Gamma_{p+1}, \exists s_{j} \in \mathcal{S} \text{ such that } S=\{s_j \dots s_{j+|S|-1}\}
		\label{eq:ordered_p+1}
	\end{equation}

	\begin{multline*}
		\exists S_a, S_b \in \Gamma_p \text{ such that } \forall S,S' \in \Gamma_p, \\
		D_{min}(S_a, S_b) \le D_{min}(S, S') \\
		\shoveleft{\implies \Gamma_{p+1} = (\Gamma_p \backslash \{S_a, S_b\}) \cup \{S_a \cup S_b\}}
	\end{multline*}

	We assume without loss of generality $max(S_a) \le min(S_b)$.
	By definition, the statement is true for $\Gamma_p$. We observe that $\Gamma_{p+1} \backslash \{S_a \cup S_b\} \subset \Gamma_{p}$ so it is true for $\Gamma_{p+1} \backslash \{S_a \cup S_b\}$. Then, we only need to prove that:
	$$
		\exists j \in \{1 \dots n\} \text{ such that }, S_a\cup S_b = \{s_j \dots s_{j+|S_a\cup S_b|-1}\}
	$$

	We note $ind$ the function that returns the index of an element: $ind(s_j)=j$. We denote $s_{j^1}, s_{j^2} \in \mathcal{S}$ such that $S_a=\{s_{j^1} \dots \}, S_b=\{s_{j^2} \dots \}$.

	First, let us suppose $\nexists s_{j'} \in \mathcal{S} \text{ such that } ind(min(S_a)) < j' < ind(max(S_b))$, it implies that $S_a\cup S_b = \{s_{j^1} \dots s_{j^1+|S_a|-1}, s_{j^2} \dots s_{j^2 + |S_b| -1}\}$. In this case, Equation \eqref{eq:ordered_p+1} is true.

	Second, we suppose $\exists s_{j'} \in \mathcal{S} \text{ such that } ind(min(S_a)) < j' < ind(max(S_b))$.  First, we note that:
	\begin{multline*}
		ind(min(S_a)) < j' < ind(max(S_b)) \\
		\implies s_{j'} \in [min(S_a), max(S_b)].
	\end{multline*}

	By induction hypothesis, we know that $s_{j'} \notin ]min(S_a), max(S_a)[$ and $s_{j'} \notin ]min(S_b), max(S_b)[$. Thus, there are 3 possibilities:
					\begin{enumerate}
						\item $s_{j'} = min(S_a)$
						\item $s_{j'} = max(S_b)$
						\item $s_{j'} \in [max(S_a), min(S_b)]$
					\end{enumerate}

					Case $s_{j'} = min(S_a)$, we just need to swap the index of  $s_{j'}$ and $min(S_a)$ which is possible because they are equal. Then, Equation \eqref{eq:ordered_p+1} becomes true.

					Case $s_{j'} = max(S_a)$, in the same way as in the previous case, Equation \eqref{eq:ordered_p+1} becomes true.

					Case $s_{j'} \in [max(S_a), min(S_b)]$: we can first highlight that $s_{j'} \notin ]max(S_a), min(S_b)[$. Otherwise, we would have:
	\begin{multline*}
		D_{min}(\{s_{j'}\}, S_a\cup S_b) < D_{min}(S_a, S_b) \\
		\implies \mathcal{L}(\Gamma_{p+1}) < \mathcal{L}(\Gamma_p) \qquad \qquad
	\end{multline*}

	This last implication is impossible because of Equation \eqref{eq:prop_lvl}. Thus, either $s_{j'} = max(S_a)$ or $s_{j'}=min(S_b)$.

	If $s_{j'} = max(S_a)$ then $min(S_a)=max(S_a)$. Otherwise, it would exist $i'<p+1$ such that $\mathcal{L}(\Gamma_{p+1})=0<\mathcal{L}(\Gamma_{i'})$, which is impossible. Since we have $min(S_a)=s_{j'}$, we can switch their index so Equation \eqref{eq:ordered_p+1} is true.
	If $s_{j'} = min(S_b)$, same reasoning.

	Finally, we always have $S_a\cup S_b = \{s_{j^1} \dots s_{j^1+|S_a|-1}, s_{j^2} \dots s_{j^2 + |S_b| -1}\}$. And we proved that:
	$$
		\forall S \in \Gamma_{p+1}, \exists j \in \{1 \dots \mathcal{S}\} \text{ such that } S=\{s_j \dots s_{j+|S|-1}\}
	$$

\end{proof}
Theorem \ref{th:order} implies that there is a total order between the clusters:
$$
	\forall S, S' \in \Gamma_i, S \preceq S' \iff max(S) \le min (S')
$$

Thus, the clusters $\Gamma_i = \{S_1 \dots S_g\}$ can be sorted from the cluster $S_1$ containing the highest scores to $S_g$ containing the lowest scores.

\begin{distances}{The intra-cluster distances are smaller than the inter-cluster distances, i.e.,}
	\label{th:dist}
	\begin{align*}
		\forall i \in & \{0 \dots n-1\},                                                                        \\
		min           & \{D_{min}(S,S'): \forall S,S' \in \Gamma_i, S \neq S' \} \ge                            \\
		              & max\{|s_j - s_{j+1}|: \forall S \in \Gamma_i, \forall s_j \in S \backslash \{min(S)\}\}
	\end{align*}
\end{distances}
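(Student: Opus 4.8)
The plan is to translate both sides of the inequality into statements about the level function $\mathcal{L}$ defined in Figure \ref{alg:single-linkage-clust}, and then invoke its monotonicity from Equation \eqref{eq:prop_lvl}. The boundary cases are immediate: for $i=0$ every cluster is a singleton, so the right-hand maximum is taken over an empty index set; and for $i=n-1$ there is a single cluster, so the left-hand minimum is taken over an empty set of distinct pairs. In both cases the inequality holds vacuously, so I would concentrate on $1 \le i \le n-2$, where $\Gamma_i$ contains at least two clusters and the next merge is well defined.

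First I would rewrite the left-hand side. By Theorem \ref{th:order} every cluster of $\Gamma_i$ is a contiguous block of consecutively ranked scores, so the clusters are totally ordered and the closest pair of points between any two distinct clusters consists of the two boundary scores of a pair of \emph{adjacent} clusters. Consequently $min\{D_{min}(S,S') : S,S' \in \Gamma_i, S\neq S'\}$ is exactly the distance realised by the next merge performed by the algorithm, namely $\mathcal{L}(i+1)$, since step $i+1$ selects precisely the pair minimising $D_{min}$ over $\Gamma_i$ and records that distance as $\mathcal{L}(i+1)$.

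Next I would rewrite each term of the right-hand side. Fix a cluster $S\in\Gamma_i$ and a score $s_j \in S\backslash\{min(S)\}$; by Theorem \ref{th:order} the successor $s_{j+1}$ lies in the same cluster $S$, so $s_j - s_{j+1}$ is an intra-cluster consecutive gap. The key step is to show this gap equals some past level $\mathcal{L}(p)$ with $p \le i$. I would argue that $s_j$ and $s_{j+1}$ were first placed into a common cluster at some iteration $p\le i$, which merged two clusters $A \ni s_j$ and $B \ni s_{j+1}$. Because the ordered-block structure is preserved at every iteration and $s_j,s_{j+1}$ are consecutive scores that end up adjacent, $A$ must terminate at $s_j=min(A)$ and $B$ must start at $s_{j+1}=max(B)$; hence $D_{min}(A,B)=s_j-s_{j+1}$, and since this merge was the minimiser at iteration $p$ we obtain $\mathcal{L}(p)=s_j-s_{j+1}$.

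Finally I would combine the two reformulations with monotonicity. Every intra-cluster gap equals $\mathcal{L}(p)$ for some $p\le i$, so by iterating Equation \eqref{eq:prop_lvl} it is bounded above by $\mathcal{L}(i+1)$, which I already identified as the minimum inter-cluster distance; taking the maximum over all such gaps preserves the bound and yields the claim. The main obstacle is the middle step: rigorously justifying that the two sub-clusters $A$ and $B$ whose merge first joins $s_j$ and $s_{j+1}$ are adjacent blocks whose nearest points are exactly $s_j$ and $s_{j+1}$. This requires the ordered-block property of Theorem \ref{th:order} together with the observation that no score lies strictly between two consecutive scores, which rules out any intervening cluster and forces the boundary alignment $s_j=min(A)$, $s_{j+1}=max(B)$.
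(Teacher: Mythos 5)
Your proof is correct, and it takes a genuinely different route from the paper's, although it is assembled from the same two ingredients (Theorem \ref{th:order} and the level monotonicity of Equation \eqref{eq:prop_lvl}). The paper argues by induction on the iteration index: assuming the inequality for $\Gamma_p$, it isolates the single new gap created by merging $S_a$ and $S_b$, uses Theorem \ref{th:order} to show this gap sits at the boundary ($s_{j_{sep}}=min(S_a)$, $s_{j_{sep}+1}=max(S_b)$) and equals $D_{min}(S_a,S_b)$, controls the gaps already present inside $S_a$ and $S_b$ by the induction hypothesis, and controls the new inter-cluster minimum via Equation \eqref{eq:prop_lvl}. You dispense with induction on the statement itself and instead establish two exact identities: the left-hand side equals $\mathcal{L}(i+1)$ for $i\le n-2$ (you correctly treat $i=0$ and $i=n-1$ as vacuous), because iteration $i+1$ selects precisely the pair minimising $D_{min}$ over $\Gamma_i$ and records that distance as the level; and every intra-cluster consecutive gap $s_j-s_{j+1}$ equals $\mathcal{L}(p)$ for the first iteration $p\le i$ at which $s_j$ and $s_{j+1}$ share a cluster, the boundary alignment $s_j=min(A)$, $s_{j+1}=max(B)$ being forced by the ordered-block property exactly as you argue. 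A single appeal to the monotonicity of $\mathcal{L}$ then closes the proof. Your decomposition buys a sharper conclusion than the paper needs --- a dendrogram-level characterization in which the separating distance \emph{is} the next merge level and every internal gap \emph{is} a past merge level --- whereas the paper's induction is more self-contained step by step and avoids your one extra well-definedness point, namely that ``the first iteration joining $s_j$ and $s_{j+1}$'' exists and merges clusters $A\ni s_j$ and $B\ni s_{j+1}$; this is immediate because cluster membership only grows under merging, but it deserves an explicit sentence. A final minor remark applying to both arguments: when consecutive scores are tied, the gap is $0\le D_{min}$, so the index-swapping convention of the paper's Notation \ref{notations} causes no difficulty.
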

\begin{proof}
	We proceed by induction.

	For $i=0$, it is true because $\Gamma_0$ only contains one-point clusters.

	Let us suppose the statement is true for $i=p$ with $p<n-1$ and we will prove that:
	\begin{equation}
		\begin{aligned}
			min & \{D_{min}(S,S'): \forall S,S' \in \Gamma_{p+1}, S \neq S' \} \ge                            \\
			    & max\{|s_j - s_{j+1}|: \forall S \in \Gamma_{p+1}, \forall s_j \in S \backslash \{min(S)\}\}
		\end{aligned}
		\label{eq:dist}
	\end{equation}

	\begin{multline*}
		\text{It exists two clusters } S_a, S_b \in \Gamma_p \text{ such that}\\
		\quad \text{for all }  S,S' \in \Gamma_p, S \neq S',
		D_{min}(S_a, S_b) \le D_{min}(S, S') \\
		\shoveleft{\implies \Gamma_{p+1} = (\Gamma_p \backslash \{S_a, S_b\}) \cup \{S_a \cup S_b\}}
	\end{multline*}

	From Equation \eqref{eq:prop_lvl}, we deduce that:
	\begin{align*}
		\qquad \mathcal{L}(\Gamma_{p+1}) & \ge \mathcal{L}(\Gamma_{p})                                             \\
		\iff  min                        & \{D_{min}(S,S'): \forall S,S' \in \Gamma_{p+1}, S \neq S' \} \ge        \\
		                                 & min\{D_{min}(S,S'): \forall S,S' \in \Gamma_{p}, S \neq S' \}           \\
		\implies min                     & \{D_{min}(S,S'): \forall S,S' \in \Gamma_{p+1}, S \neq S' \} \ge        \\
		                                 & max\{|s_j - s_{j+1}|: \forall S \in \Gamma_{p+1}\backslash \{S_a,S_b\}, \\
		                                 & \qquad \qquad \forall s_j \in S \backslash \{min(S)\}\}
	\end{align*}

	Thus, to prove Equation \eqref{eq:dist}, we only have to show that:
	\begin{align*}
		min & \{D_{min}(S,S'): \forall S,S' \in \Gamma_{p+1}, S \neq S' \} \ge              \\
		    & max\{|s_j - s_{j+1}|: s_j \in (S_a\cup S_b) \backslash \{min(S_a\cup S_b)\}\}
	\end{align*}

	Let us consider $S_a \succeq S_b$ and note $j_{sep}=min(S_a)$. Thus, $s_{j_{sep}+1}=max(S_b)$ because of Theorem \ref{th:order} and $D_{min}(S_a, S_v) = s_{j_{sep}} - s_{j_{sep}+1}$. From our hypothesis, we know that
	\begin{align*}
		D_{min}(S_a,S_b) \ge                   & max\{|s_j - s_{j+1}|: \forall S \in \{S_a,S_b\},        \\
		                                       & s_j \in S \backslash \{min(S)\}\}                       \\
		\implies s_{j_{sep}} - s_{j_{sep}+1} = & max\{|s_j - s_{j+1}|:                                   \\
		                                       & s_j \in (S_a\cup S_v) \backslash \{min(S_a\cup S_b)\}\}
	\end{align*}

	Finally, from Equation \eqref{eq:prop_lvl} again:
	\begin{align*}
		\qquad \mathcal{L}(\Gamma_{p+1}) & \ge \mathcal{L}(\Gamma_{p})                                                           \\
		\iff  min                        & \{D_{min}(S,S'): \forall S,S' \in \Gamma_{p+1}, S \neq S' \} \ge                      \\
		                                 & min\{D_{min}(S,S'): \forall S,S' \in \Gamma_{p}, S \neq S' \}                         \\
		\iff  min                        & \{D_{min}(S,S'): \forall S,S' \in \Gamma_{p+1}, S \neq S' \} \ge                      \\
		                                 & D_{min}(S_a,S_b) = s_{j_{sep}} - s_{j_{sep}+1}                                        \\
		\implies min                     & \{D_{min}(S,S'): \forall S,S' \in \Gamma_{p+1}, S \neq S' \} \ge                      \\
		max\{|                           & s_j - s_{j+1}|: s_j \in (S_a\cup S_b) \backslash \{min(S_a\cup S_v)\}\}               \\
		\implies min                     & \{D_{min}(S,S'): \forall S,S' \in \Gamma_{p+1}, S \neq S' \} \ge                      \\
		max\{|                           & s_j - s_{j+1}|: \forall S \in \Gamma_{p+1}, \forall s_j \in S \backslash \{min(S)\}\}
	\end{align*}

\end{proof}

Theorem \ref{th:dist} can be interpreted as follows: the consecutive distances inside the clusters are smaller than (or equal to) the distances between the clusters, if the distance between two clusters is the dissimilarity $D_{min}$.

\subsection{Finding the best-candidate cluster}
\label{app:main}
\begin{bestcandidate-cluster}
\label{def:smax}
We define the best-candidate cluster $S_{max}$, with MaxSize its maximum size, as:
\begin{align*}
	\exists & i_{max} \in \{0, \dots , n - 2\},                                                          \\
	        & \quad i_{max} = max\{i: \exists S \in \Gamma_i, s_1 \in S \wedge |S| \le \text{MaxSize} \} \\
	        & \; S_{max} \in \Gamma_{i_{max}}, s_1 \in S_{max}
\end{align*}
\end{bestcandidate-cluster}

\begin{lem} The following holds:
	\begin{align*}
		\exists i \le & \text{MaxSize}, S_{max} = \{s_1, \dots ,s_i\} \text{ and }        \\
		              & \forall j \le \text{MaxSize} + 1, s_i - s_{i-1} \le s_j - s_{j-1}
	\end{align*}
\end{lem}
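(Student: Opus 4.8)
The plan is to pin down $S_{max}$ explicitly by exploiting the two structural facts already established for one-dimensional single-linkage clustering, namely Theorem~\ref{th:order} (clusters are blocks of consecutive scores) and Theorem~\ref{th:dist} (intra-cluster gaps never exceed inter-cluster gaps), together with the monotonicity of the level function in Equation~\eqref{eq:prop_lvl}.

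First I would settle the prefix form. By Theorem~\ref{th:order}, every cluster occurring in any $\Gamma_k$ is a set of consecutive scores $\{s_j,\dots,s_{j+|S|-1}\}$. Since $s_1$ is the largest score, the unique cluster containing $s_1$ can only be a block whose left endpoint is $s_1$; hence $S_{max}=\{s_1,\dots,s_i\}$ for some $i$, and by Definition~\ref{def:smax} we have $i=|S_{max}|\le\text{MaxSize}$. This disposes of the first conjunct with essentially no computation.

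The substantive step is to identify $i$ with the position of the largest gap among the leading scores, as in Figure~\ref{alg:clust}. I would examine the single merge carrying $\Gamma_{i_{max}}$ to $\Gamma_{i_{max}+1}$. By maximality of $i_{max}$ (Definition~\ref{def:smax}), the $s_1$-cluster has size $\le\text{MaxSize}$ in $\Gamma_{i_{max}}$ and size $>\text{MaxSize}$ in $\Gamma_{i_{max}+1}$; since clusters are consecutive blocks (Theorem~\ref{th:order}) and nothing lies left of $s_1$, that merge must absorb $S_{max}$ into the block immediately to its right, closing the boundary gap $s_i-s_{i+1}$. For the gaps lying to the left of this boundary, i.e.\ strictly inside $S_{max}$, they have already been merged in $\Gamma_{i_{max}}$ and are therefore no larger than the still-open boundary gap, by Theorem~\ref{th:dist} and the level monotonicity of Equation~\eqref{eq:prop_lvl}. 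For a gap $s_{j-1}-s_j$ lying to the right with $j\le\text{MaxSize}+1$, I would argue by contradiction: if it strictly exceeded the boundary gap it would still be open in $\Gamma_{i_{max}}$, so closing the boundary gap at the next step would enlarge the $s_1$-cluster only up to that gap, i.e.\ to size at most $\text{MaxSize}$, contradicting that $i_{max}$ is the last qualifying iteration. Hence the boundary gap $s_i-s_{i+1}$ dominates every gap among the first $\text{MaxSize}$ candidates, which is the second conjunct (up to the index convention used in the statement).

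The main obstacle I anticipate is the careful handling of ties and making the ``leftmost open boundary determines the prefix'' picture rigorous. Equal scores or equal gaps must be treated by the same index-swapping convention already adopted in Notation~\ref{notations} and in the proof of Theorem~\ref{th:order}, so that the index $i$ is well defined even when several gaps tie for the maximum. Beyond that, the whole argument rests only on the two monotonicity statements, Theorem~\ref{th:dist} and Equation~\eqref{eq:prop_lvl}, so once the boundary-merge step and the contradiction for the right-hand gaps are set up, no further computation is needed.
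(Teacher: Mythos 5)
Your proposal is correct and follows essentially the same route as the paper's proof: both pin down $S_{max}$ as a prefix $\{s_1,\dots,s_i\}$ via Theorem~\ref{th:order}, zoom in on the merge taking $\Gamma_{i_{max}}$ to $\Gamma_{i_{max}+1}$ forced by the maximality in Definition~\ref{def:smax}, and dominate all leading gaps by the boundary gap $s_{j_{sep}}-s_{j_{sep}+1}$ using Theorem~\ref{th:dist} together with the level monotonicity of Equation~\eqref{eq:prop_lvl}. The only cosmetic difference is that the paper first shows the merged block $S_1\cup S_2=\{s_1,\dots,s_h\}$ satisfies $h>\text{MaxSize}$ and then bounds all of its intra-cluster gaps at once via Theorem~\ref{th:dist}, whereas you split the gaps into those left of the boundary (handled by Theorem~\ref{th:dist}) and those to the right (handled by a still-open-gap contradiction that implicitly reproves $h>\text{MaxSize}$).
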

\begin{proof}
	From Theorem \ref{th:order} of Appendix \ref{app:side_th}, we deduce that $\Gamma_{i_{max}} = \{S_1, S_2 \dots  S_g\}$ with $S_1$ and $S_2$ are the two clusters containing the highest scores $S_1 \succeq S_2$. Note that $S_{max}=S_1$. We also have $S_1 \cup S_2 = \{s_1  \dots  s_h\}$. We must have $h > \text{MaxSize}$, otherwise, it would exist $i'>i_{max}$ such that $\Gamma_{i'} = {S_1' \dots S_g'}$ with $|S_1'| \le \text{MaxSize}$. This is impossible because of the Definition \ref{def:smax}.

	Moreover, we know that $\exists j_{sep} \le \text{MaxSize}, |S_{max}|=|S_1|=j_{sep}$. Then, $max(S_2) = s_{j_{sep}+1}$ and $min(S_1)=s_{j_{sep}} \implies D_{min} (S_1,S_2) = s_{j_{sep}} - s_{j_{sep}+1}$

	From Theorem \ref{th:dist} of Appendix \ref{app:side_th}, we deduce that:
	\begin{align*}
		 & D_{min}(S_1, S_2) \ge s_j - s_{j+1}, \forall j < h, j \neq j_{sep}                          \\
		 & \quad \implies s_{j_{sep}} - s_{j_{sep}+1} \ge s_j - s_{j+1}, \forall j < h, j \neq j_{sep} \\
		 & \; \text{However, } j_{sep} \le \text{MaxSize}                                              \\
		 & \quad \implies s_{j_{sep}} - s_{j_{sep}+1} \ge s_j - s_{j+1}, \forall j < \text{MaxSize}
	\end{align*}
	Finally, we obtain:
	\begin{align*}
		\exists i=j_{sep} \le & \text{MaxSize}, S_{max} = \{s_1, \dots ,s_i\} \text{ and }        \\
		                      & \forall j \le \text{MaxSize} + 1, s_i - s_{i-1} \le s_j - s_{j-1}
	\end{align*}
\end{proof}

\section{Probability distributions}
\label{app:prob}
In this section, we consider the probability distributions used to simulate the query distribution. In our experiments, we have $N$ possible values from which we extract $l$ unique elements. The values are trapdoors and $N = m_{\text{real}}$.

\paragraph{Uniform distribution}
The (discrete) uniform distribution is defined over $\{a \dots b\}, a,b \in \mathbb{Z} \text{ and } a \le b$ and is denoted as $\mathcal{U}\{a,b\}$. Its probability mass function is the following:
$$
	\forall k \in \{a \dots b\}, f(x; a, b) = \frac{1}{b - a + 1}
$$

In our case, we want to sample values (i.e., trapdoors) that are indexed from 1 to $N$ (i.e., $m_{\text{real}}$) so we can use  $\mathcal{U}\{1,N\}$. We end up to the following probability mass function:
$$
	\forall k \in \{1 \dots N\}, f(k; 1, N) = \frac{1}{N}
$$
\paragraph{Zipfian distribution}
The Zipfian distribution has two parameter $N$ and $s$. Its probability mass function is the following:
$$
	\forall k \in \{1 \dots N\}, f(k; s, N) = \frac{k^{-1}}{\sum_{n=1}^{N} n^{-1}}
$$

In our case, $k$ is the rank of a trapdoor. The trapdoor for which the underlying keyword is the $k$-th most frequent has the rank $k$. As in IKK, we use $s=1$ which represents the empirical source of the distribution.
\paragraph{Inverted Zipfian distribution}
The inverted Zipfian distribution has two parameters $N$ and $s$. Its probability mass function is the following:
$$
	\forall k \in \{1 \dots N\}, f(k; s, N) = \frac{(N - k + 1)^{-1}}{\sum_{n=1}^{N} n^{-1}}
$$

This distribution simply inverts the rank of the values compared to the Zipfian distribution: the highest rank becomes the lowest rank, the second highest rank becomes the second lowest rank and so on. We also use $s=1$ in this distribution.

\end{document}